\keywords{Non-deterministic exponential time,
first-order spectra, three-variable logic, bipartite graphs}
\renewcommand{\leq}{\leqslant}
\renewcommand{\geq}{\geqslant}
\def\fin{\textrm{end}}
\def\comp{\textrm{same-comp}}
\def\bbN{\mathbb{N}}
\def\stra{\mathfrak{A}}
\def\cC{\mathcal{C}}
\def\cocC{\textsc{co-}\mathcal{C}}
\def\cD{\mathcal{D}}
\def\cF{\mathcal{F}}
\def\cP{\mathcal{P}}
\def\vz{{\bar{z}}}
\def\nram#1{\textsc{NRAM}[#1]}
\def\ntime#1{\textsc{NTIME}[#1]}
\def\dspace#1{\textsc{DSPACE}[#1]}
\def\cone{\textsc{co-NE}}
\def\ne{\textsc{NE}}
\def\np{\textsc{NP}}
\def\pspace{\textsc{PSPACE}}
\def\expspace{\textsc{EXPSPACE}}
\def\specarity#1{\textsc{Spec}\mathrm{(arity}\ #1\mathrm{)}}
\def\spec{\textsc{Spec}}
\def\fo{\textsf{FO}}
\def\OMIT #1{}
\begin{document}

% Author macros::begin %%%%%%%%%%%%%%%%%%%%%%%%%%%%%%%%%%%%%%%%%%%%%%%%

\title[A note on first-order spectra with binary relations]{A note on first-order spectra with binary relations}

\author[E. Kopczy\'nski]{Eryk Kopczy\'nski}
\address{University of Warsaw}
\email{erykk@mimuw.edu.pl}

\author[T. Tan]{Tony Tan}
\address{National Taiwan University}
\email{tonytan@csie.ntu.edu.tw}

\begin{abstract}
The {\em spectrum} of a first-order sentence
is the set of the cardinalities of its finite models.
In this paper, we consider the spectra of sentences over binary relations
that use at least three variables.
We show that for every such sentence $\Phi$, there is a sentence $\Phi'$
that uses the same number of variables, but only one symmetric binary relation, such that
its spectrum is linearly proportional to the spectrum of $\Phi$.
Moreover, the models of $\Phi'$ are all bipartite graphs.
As a corollary, we obtain that to settle Asser's conjecture,
i.e., whether the class of spectra is closed under complement,
it is sufficient to consider only sentences using only three variables
whose models are restricted to undirected bipartite graphs.
\end{abstract}

\maketitle

\section{Introduction}
\label{sec:intro}

The notion of first-order spectrum was first defined by Scholz~\cite{Scholz52}.
Formally, the spectrum of a (first-order) sentence $\varphi$ (with the equality predicate), 
denoted by $\spec(\varphi)$,
is the set of cardinalities of finite models of $\varphi$.
A set is called a spectrum,
if it is the spectrum of a first-order sentence.
Let $\spec$ denote the class of all spectra.

One of the first and well known problems in finite model theory, called {\em Asser's conjecture}, 
asks whether the complement of a spectrum is also a spectrum~\cite{Asser55}.
It turns out to be equivalent to $\ne$ vs. $\cone$ 
problem~\cite{JS74,Fagin73,Fagin}.\footnote{$\ne$ is the class of languages accepted
by a non-deterministic (possibly multi-tape) Turing machine with run time
$O(2^{kn})$, for some constant $k > 0$.}
More specifically, it is shown that the class $\ne$ is captured precisely by $\spec$
in the following sense:
For every spectrum $A$, the language that consists of the binary representations of the numbers in $A$
belongs to the class $\ne$, and vice versa,
for every language $L \subseteq 1\cdot \{0,1\}^*$ , i.e., it consists of only words that start with symbol $1$,
if $L \in \ne$, then the set of integers whose binary representations are in $L$ is a spectrum. 
For a more comprehensive treatment on the spectrum problem
and its history,
we refer interested readers to an excellent survey by 
Durand, Jones, Makowsky and More~\cite{DJMM12}, and the references therein.

It is reasonable to say that a definitive solution of Asser's conjecture seems still far away.
Thus, it is natural to consider the spectra of some restricted classes of first-order logic.
Fagin~\cite{Fagin75} was the first to notice that to settle Asser's conjecture,
it is sufficient to consider only first-order logic over graphs.
More formally, he showed that 
for every spectrum $A$, there is a positive integer $k>0$ such that 
$\{n^k \mid n \in A\}$ is the spectrum of a sentence using only one binary relation symbol.
Implicitly, it implies that if there is a spectrum whose complement is not a spectrum,
then there is such a spectrum of first-order sentence using only one binary relation~\cite{Fagin,Fagin73},
i.e., Asser's conjecture can be reduced to first-order sentences over graphs.

Durand and Ranaivoson~\cite{DurandR96} considered the class of spectra
of sentences using only unary function symbols and
proved that it is included in
the class of spectra of sentences using only one binary relation.
In particular, they established that the spectra of sentences using only unary function symbols
are exactly the spectra of sentences using one binary relation when 
the models for the latter are restricted to directed graphs of bounded outdegree.
They also showed that there is a sentence $\varphi$ using {\em two} unary functions such that 
the language $\{1^n \mid n \in \spec(\varphi)\}$ is $\np$-complete.
That two unary functions are necessary to obtain an $\np$-complete language
is shown immediately by Durand, Fagin and Loescher~\cite{DurandR96,DurandFL97},
where they show that the spectrum of a first-order sentence using only {\em one} unary function symbol
is a semilinear set.

Complementing Fagin's result,
we showed that Asser's conjecture can be reduced to
sentences using only three variables and multiple binary relations~\cite{KT15tocl}.
The three variable requirement seems to be optimal,
as we also showed that the class of the spectra of sentences using two variables
and counting quantifiers is precisely the class of semilinear sets
and closed under complement~\cite{KT15sicomp}.
In fact, we essentially showed that models of two-variable logic with counting
are simply collections of regular bipartite graphs.

In this paper we present the following result.

\begin{thm}
\label{theo:main}
For every sentence $\Phi$ using at least three variables
over binary relation symbols $R_1,\ldots,R_m$,
there is a sentence $\Phi'$ over a single binary relation symbol $E$ that
uses the same number of variables as $\Phi$ such that:
\begin{eqnarray*}
\spec(\Phi') & = &  \{pn+q \mid n \in \spec(\Phi)\},\qquad\qquad\textrm{for some integers}\ p \ \textrm{and}\ q.
\end{eqnarray*}
Moreover, every model of $\Phi'$ is an undirected bipartite graph.
\end{thm}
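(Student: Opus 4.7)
The plan is to encode each model $\stra = (A, R_1, \ldots, R_m)$ of $\Phi$, with $|A| = n$, as a bipartite graph $G(\stra) = (V_0 \sqcup V_1, E)$ whose vertex count is linear in $n$, and then to take $\Phi'$ as the conjunction of \emph{rigidity} axioms that pin every $E$-model to this shape together with a mechanical \emph{translation} of $\Phi$ in which each atom $R_i(x,y)$ is replaced by a formula over $E$ that uses only the variable names already appearing in $\Phi$.

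For the encoding I would place on the left side $V_0$ one \emph{main} vertex $a$ and one \emph{identity twin} $a^{\#}$ for each $a \in A$, together with a constant-size \emph{directory} of distinguished anchor vertices. On the right side $V_1$ I would place one \emph{identity auxiliary} $a'$ per element, $m$ \emph{ports} $p_{a,1}, \ldots, p_{a,m}$ per element, and the remaining directory anchors. The intended edges are the length-three identity chain $a - a' - a^{\#} - p_{a,i}$ for every $(a,i)$; a type tag $p_{a,i} - \tau_i$ where $\tau_i \in V_0$ is the anchor for relation $R_i$; and the relation edge $a - p_{b,i}$ precisely when $R_i(a,b)$ holds in $\stra$. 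Every such edge crosses the bipartition, so $G(\stra)$ is automatically bipartite, and the vertex count is $(m+3)n + q$ where $q$ is the directory size.

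The translation replaces each atom $R_i(x,y)$ by the assertion ``some $z$ adjacent to $x$ is the type-$i$ port of $y$'', which unfolds with variable reuse to
\[
\exists z\, \bigl( E(x,z) \wedge \eta_i(z) \wedge \exists x\, \bigl( E(z,x) \wedge \exists z\, (E(x,z) \wedge E(z,y)) \bigr) \bigr),
\]
where $\eta_i(z)$ is a one-free-variable formula saying ``$z$ is tagged by $\tau_i$''. The inner $\exists x$ and $\exists z$ shadow the outer ones and walk the length-three identity chain from the port $z = p_{y,i}$ through $y^{\#}$ and $y'$ back to $y$. All quantifiers of $\Phi$ are simultaneously relativized to a unary predicate ``$x$ is main'' defined analogously; the whole translation stays inside three variables $\{x,y,z\}$, so $\Phi'$ uses exactly the same variable names as $\Phi$.

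The main obstacle is fitting everything into that three-variable budget: the identity chain already has length three, each anchor $\tau_i$ must be uniquely definable by a one-variable formula, and the bipartition must be forced explicitly because $E$ is symmetric but not a priori acyclic. My fix is to realise the directory as a rigid labelled path whose unique endpoint is locally distinguishable, so that each anchor becomes definable as the vertex at a fixed distance from that endpoint via the standard three-variable walk formulas obtained by variable reuse. The rigidity axioms then phrase every shape constraint locally: each port has exactly one type tag, each identity auxiliary has exactly two neighbours, every vertex falls into one of the fixed unary-definable types, and $E$ joins only differently-typed vertices. A routine verification then shows that the map $\stra \mapsto G(\stra)$ is a bijection between models of $\Phi$ and $E$-models of $\Phi'$, giving $\spec(\Phi') = \{(m+3)n + q \mid n \in \spec(\Phi)\}$.
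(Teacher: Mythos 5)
Your overall architecture is the one the paper itself uses: a constant-size, first-order-definable anchor path (your ``directory'' plays the role of the paper's caterpillar $\cC$), an $(m+3)$-vertex gadget per element (the paper's $\cD$, with your main/twin/auxiliary/ports corresponding to its $P,Q,S,R_1,\dots,R_m$ vertices), relations encoded as single cross-edges between gadgets, typed local rigidity axioms (the paper's $\cP_1$--$\cP_6$), and a relativized three-variable translation, giving the same count $(m+3)n+q$. The genuine problem is the displayed translation of the atom $R_i(x,y)$: the two inner quantifiers carry no type guards, so nothing forces the walk from the port back to $y$ to stay on $y$'s identity chain, and the formula is unsound. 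Concretely, in your own encoding take any $b$ with $R_i(x,b)$ and $R_i(y,b)$ but not $R_i(x,y)$: choose $z=p_{b,i}$ (so $E(x,z)$ and $\eta_i(z)$ hold), let the inner $x$ be the main vertex $y$ itself (it is adjacent to $p_{b,i}$ because of the relation edge for $R_i(y,b)$), and let the inner $z$ be any neighbour of $y$, say $y'$; then $E(x,z)\wedge E(z,y)$ holds and the translated atom is satisfied although $R_i(x,y)$ fails. Other spurious witnesses run through $b^{\#}$ and a port $p_{b,j}$ with $R_j(y,b)$ for any $j$. So, as written, models of $\Phi$ and of $\Phi'$ are not in correspondence and the spectrum is not preserved.

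The repair is exactly the mechanism the paper builds into its atom formula and into $\cP_5$, $\cP_6$: guard \emph{every} step of the walk by a unary type predicate and use the rigidity axioms to make each typed neighbour unique. Requiring the inner $x$ to be an ``identity twin'' forces it to be $b^{\#}$ (the unique twin adjacent to a port of $b$), and requiring the inner $z$ to be an ``identity auxiliary'' forces it to be $b'$, after which $E(z,y)$ for a main $y$ forces $y=b$; this uniqueness of the typed walk is what makes the equi-satisfiability induction (the paper's Lemma~\ref{lem:final}) go through. You already have the needed unary predicates in your rigidity axioms, so the fix is local, but without the guards the key step fails. A secondary point you should spell out rather than gesture at: that in an \emph{arbitrary} model of $\Phi'$ the directory really is a path of the intended length and the anchors are uniquely definable with three variables (the paper needs its $\cP_2$--$\cP_4$, including the ``unique distance between anchor vertices'' condition to exclude cycles), since this is what makes $\eta_i$ and ``main'' well defined and every model bipartite.
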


Since addition, subtraction, multiplication and division by constants
can be computed in linear time (in the length of the binary representation of the input number), 
the spectra of $\Phi$ and $\Phi'$ do not differ complexity-wise.
Combined with our earlier result~\cite[Corollary~3.5]{KT15tocl} that Asser's conjecture can be reduced
three variable sentences with binary relations,
Theorem~\ref{theo:main} immediately implies that
Asser's conjecture can be further reduced to three variable sentences using only 
one binary relation with models being restricted to bipartite graphs.
It is stated formally as Corollary~\ref{cor:asser} below.

\begin{cor}
\label{cor:asser}
The following two sentences are equivalent.
\begin{itemize}\itemsep=0pt
\item 
The class of first-order spectra is closed under complement.
\item
The complement of every spectrum of first-order sentence using only three variables whose models are all 
undirected bipartite graphs
is also a spectrum.
\end{itemize}
\end{cor}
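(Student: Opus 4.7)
The plan is to derive Corollary~\ref{cor:asser} by chaining Theorem~\ref{theo:main} with the earlier reduction~\cite[Corollary~3.5]{KT15tocl}, together with the closure of the class $\spec$ under affine pre-composition.

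The direction from the first bullet to the second is immediate, since the second bullet is a special case of the first. For the converse, I would take an arbitrary first-order spectrum $A = \spec(\Phi_0)$ and show $\overline{A}$ is a spectrum under the bipartite three-variable hypothesis. First, invoke~\cite[Corollary~3.5]{KT15tocl} to replace $\Phi_0$ by a three-variable sentence $\Phi$ over (possibly many) binary relations whose spectrum is related to $\spec(\Phi_0)$ by an affine transformation. Second, apply Theorem~\ref{theo:main} to $\Phi$ to obtain a three-variable sentence $\Phi'$ over a single binary relation whose models are all undirected bipartite graphs, and which satisfies $\spec(\Phi') = \{pn+q : n \in \spec(\Phi)\}$. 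Composing the two affine transformations yields constants $p', q'$ such that $\spec(\Phi') = \{p'n + q' : n \in \spec(\Phi_0)\}$.

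Third, by the hypothesis the set $B := \overline{\spec(\Phi')}$ is a spectrum, and since $p' \neq 0$ one has $\overline{\spec(\Phi_0)} = \{n \in \mathbb{N} : p'n + q' \in B\}$. It therefore suffices to show that this preimage is again a spectrum. I would establish this closure via the $\ne = \spec$ correspondence cited in the introduction: because $B$ is a spectrum, its binary encoding lies in $\ne$; from a binary input $n$ one computes the binary encoding of $p'n + q'$ in linear time and then invokes the $\ne$-machine recognising $B$. Hence the binary encoding of $\{n : p'n + q' \in B\}$ is likewise in $\ne$, and so this set is a spectrum, as required.

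The main obstacle here is essentially organisational rather than technical: once both deep reductions are in hand, the only remaining substantive ingredient is the closure of $\spec$ under the affine pre-composition $S \mapsto \{n : p'n + q' \in S\}$, and this is precisely the linear-time computability remark that immediately follows Theorem~\ref{theo:main} in the paper.
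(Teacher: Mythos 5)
Your overall architecture is right, but there is one genuinely false step at its core: the claim that \cite[Corollary~3.5]{KT15tocl} turns an arbitrary sentence $\Phi_0$ into a three-variable sentence $\Phi$ over binary relations with $\spec(\Phi)$ an \emph{affine} image of $\spec(\Phi_0)$. That corollary is an equivalence statement about Asser's conjecture, not a spectrum-transformation result, and no affine reduction of this kind can exist: a three-variable sentence over binary relations can be model-checked over a guessed structure of size $N=2^n$ in time $2^{O(n)}$ with a fixed exponent, so $\specvar{3}$ with binary relations sits inside $\ntime{2^{cn}}$ for a fixed $c$, whereas an arbitrary spectrum may require $\ntime{2^{kn}}$ for arbitrarily large $k$; an affine map $n\mapsto p'n+q'$ changes the binary length only by $O(1)$ and hence cannot bridge this gap. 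The reduction from arbitrary spectra to three variables necessarily goes through padding, i.e.\ a \emph{polynomial} change of the integers, and this is exactly why the paper does not argue spectrum-by-spectrum as you do: it instead takes $S\in\ntime{2^n}$ (binary input), uses the result of \cite{KT15tocl} that every such $S$ is already a three-variable binary-relation spectrum, applies Theorem~\ref{theo:main} to get $S'=\{pn+q\mid n\in S\}$ in the bipartite class, deduces $\bbN-S'\in\ne$ and hence $\bbN-S\in\ne$ by linear-time arithmetic, and only then invokes a padding argument to conclude that all of $\ne$ is closed under complement, which together with $\ne=\spec$ gives the first bullet.

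The damage is repairable without changing your plan much: replace ``affine'' by the correct polynomial padding map $f$ (injective, computable in time polynomial in the binary length), so that $\spec(\Phi')=\{p\,f(n)+q\mid n\in\spec(\Phi_0)\}$; your preimage argument $\overline{\spec(\Phi_0)}=\{n\mid p\,f(n)+q\in B\}$ and the closure of $\ne$ under such preimages (the output length is $O(|n|)$, so the $\ne$-machine for $B$ still runs in time $2^{O(|n|)}$) then go through verbatim, and this folds the paper's final padding step into your third step. As written, however, the proposal rests on a reduction that the cited corollary does not supply and that is provably unavailable, so the second (substantive) direction is not yet established.
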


Note that Corollary~\ref{cor:asser} strengthens the result by Fagin~\cite{Fagin75} 
which states that Asser's conjecture can be reduced to sentences (with arbitrary number of variables)
over graphs.
We also note the difference between Theorem~\ref{theo:main} and the result by Durand and Ranaivoson~\cite{DurandR96}
mentioned above.
In~\cite{DurandR96}, multiple {\em unary functions} are encoded using only one binary relation
(with the graphs being restricted to those with bounded outdegree),
whereas in Theorem~\ref{theo:main}, multiple {\em binary relations}
are encoded with one binary relation (albeit with linear blowup in the size of the model).

At this point, it is natural to ask whether every spectrum is 
the spectrum of a sentence over graphs, i.e., a sentence using only one relation symbol of arity $2$.
It turns out that a positive answer to this question will imply
the separation of a long standing open problem: $\ne\subsetneq \expspace$,
and thus, $\np\subsetneq \pspace$, as stated formally in Remark~\ref{remark} below.

\begin{rem}
\label{remark}
Let $\specarity {k}$ denote the class of spectra
of sentences using only relational symbols of arity $k$.
We will prove the following: {\em If $\spec=\specarity{k}$, for some integer $k$,
then $\ne\subsetneq \expspace$, and hence, $\np\subsetneq \pspace$.}

First, we show that $\specarity{k} \subseteq \dspace{2^{kn}}$,
where the input integer is written in binary form.
Let $\varphi$ be an $\fo$ sentence using relations of arity at most $k$.
To show that $\spec(\varphi) \in \dspace {2^{kn}}$, 
let $w$ be the input word that represents integer $N$ in binary form.
Each relation $R$ of arity $k$ with domain $\{1,\ldots,N\}$ takes $N^k = O(2^{k|w|})$ space.
So, each model $\stra$ with relations of arity at most $k$ takes $O(2^{k|w|})$ space.
Checking whether $\stra$ satisfies $\varphi$ takes additional $O(|w|)$ space.
To check whether $\varphi$ has a model of cardinality $N$,
one can simply check one by one every possible model with domain $\{1,\ldots,N\}$, 
each of which takes $O(2^{k|w|})$ space.
Therefore, $\spec(\varphi)\in \dspace{2^{kn}}$.

Now, by the space hierarchy theorem~\cite{SHL65},
$\dspace{2^{kn}} \subsetneq \expspace$.
Thus, if $\spec = \specarity{k}$, for some $k$, then $\ne \subsetneq \expspace$,
and by standard padding argument, it implies $\np\subsetneq\pspace$.
\end{rem}

\subsection*{Related work.}

It is already noted before that first-order logic over arbitrary vocabulary
is too vast a logic to work on.
A lot of work has been done to classify spectra based on the vocabulary,
notably on the arity of the relation and function symbols.
We will mention some of them here.
Interested readers can consult the cited papers and the references therein.

Let $\ntime {N^k}$ denote the class of sets of positive integers
(written in unary form) accepted by non-deterministic multi-tape
Turing machine in time $O(N^k)$, where $N$ is the input integer.
Lynch~\cite{Lynch82} showed that $\ntime {N^k} \subseteq \specarity {k}$, for every $k\geq 2$.
When $k=1$, the addition operator is required, i.e., $\ntime {N} \subseteq \specarity {1,+}$.
The converse of Lynch's theorem is still open.

Grandjean, Olive and Pudl\'ak established the variable hierarchy
for spectra of sentences using relation and function 
symbols~\cite{Grandjean84,Grandjean85,Grandjean90,GrandjeanO04,Pudlak75}.
Let $\nram {N^k}$ denote the class of sets of positive integers
accepted by a non-deterministic RAM in time $O(N^k)$, 
and $N$ is the input integer.
In his series of papers, Grandjean showed that
the class $\nram {N^k}$ is precisely the class of the spectra 
of first-order sentences written in prenex normal form
using only universal quantifiers and $k$ variables
with vocabulary consisting of relation and function symbols of 
arity $k$~\cite{Grandjean84,Grandjean85,Grandjean90}.
By Skolemisation, this result leads to the fact that for every integer $k \geq 1$,
the class of spectra of first-order sentences using relation and 
function symbols and $k$ variables
is precisely $\nram {N^k}$.
See also~\cite[Theorem~3.1]{GrandjeanO04}.

Grandjean~\cite{Grandjean90} also showed that the class $\nram {N}$
is precisely the class of spectra of sentences of the form $\forall x \varphi$,
where $\varphi$ is quantifier free and uses only unary functions.
Note that to express that a relation is a function requires three variables.
Since composition of functions can also be expressed with three (reusable) variables,
it implies that $\nram {N}$ is a subclass of the class of spectra involving only binary relations
and three variables.
By padding argument, it also implies that if Asser's conjecture is negative,
it suffices to consider only three-variable sentences using only binary relations.
This is similar to our result in~\cite{KT15tocl}.

A result similar to Theorem~\ref{theo:main} was also obtained by Durand, et. al.~\cite{DurandFL97}
where they showed that if $S$ is a spectrum involving $k$ unary functions,
then the set $\{kn \mid n \in S\}$ is a spectrum involving only {\em two} unary functions.
There is a strong evidence that the linear blow-up is unavoidable~\cite[Proposition~5.1]{DurandFL97}.
Durand and Ranaivoson~\cite{DurandR96} also showed that
every spectrum can be transformed (with polynomial blowup)
to a spectrum involving only unary functions, i.e.,
if $S$ is a spectrum involving $k$-ary functions, 
then $\{n^k \mid n \in S\}$ is a spectrum involving only unary functions.
Durand's thesis~\cite{Durand96} is rich with results in this direction.

Recently we also showed that there is a strict hierarchy of spectra
based on the number of variables used.
That is, more variables
yield larger class of spectra~\cite{KT15tocl}
when the vocabulary is restricted to relational symbols.

\subsection*{Organization.}
In the next section we will present the proof of Theorem~\ref{theo:main},
and we conclude with some remarks in Section~\ref{sec:concl}.

\section{Proof of Theorem~\ref{theo:main}}

In this paper, by graph we always mean undirected graph.
For a graph $G=(V,E)$ and a subset $V'\subseteq V$,
we denote by $G[V']$ the subgraph of $G$ induced by the subset $V'$.

Let $R_1,\ldots,R_m$ be binary relation symbols.
For $k\geq 0$, we denote by $\fo^{k}[R_1,\ldots,R_m]$
the class of $\fo$ formulas using $k$ variables
and binary relation symbols $R_1,\ldots,R_m$.
A formula is a sentence, if it has no free variable.
A formula is always written as $\varphi(z_1,\ldots,z_l)$
to indicate that $z_1,\ldots,z_l$ are the free variables in $\varphi$. 

An interpretation is written in a standard way $\stra = (A,R^{\stra}_1,\ldots,R^{\stra}_m)$,
where $A$ is a finite domain and each $R^{\stra}_i \subseteq A\times A$, for each $i=1,\ldots,m$.
As usual, $\stra\models \varphi$ denotes that the sentence $\varphi$ holds in $\stra$.
For a formula $\varphi(z_1,\ldots,z_l)$,
and for $i_1,\ldots,i_l \in A$, we write that $\varphi(i_1,\ldots,i_l)$ holds in $\stra$,
if $\varphi(z_1,\ldots,z_l)$ holds in $\stra$ by substituting each $z_j$ with $i_j$,
for every $j=1,\ldots,l$.

We reserve the symbol $E$ to be a binary relation symbol
that we insist to be always interpreted by a symmetric relation.
In the same way, we let $\fo^k[E]$ to be
the class of $\fo$ formulas using $k$ variables
and relation symbol $E$.
All models of sentences from $\fo^{k}[E]$ are graphs, so
we will use the standard notation $G=(V,E)\models \varphi$, or simply $G\models \varphi$,
to denote that $\varphi$ holds in $G$.

The following Lemma~\ref{lem:preserve-k-var} immediately implies Theorem~\ref{theo:main}.

\begin{lem}
\label{lem:preserve-k-var}
Let $k\geq 3$.
For every $\Phi \in\fo^k[R_1,\ldots,R_m]$,
there is $\Phi' \in \fo^k[E]$ such that the following holds.
\begin{itemize}\itemsep=0pt
\item
For every $\stra \models \Phi$, there is $G=(V,E)\models \Phi'$
such that $|V|=(m+3)|A|+8m+2$.
\item
For every $G=(V,E)\models \Phi'$, there is $\stra \models \Phi$
such that $|V|=(m+3)|A|+8m+2$.
\end{itemize}
Moreover, all models of $\Phi'$ are bipartite graphs.
\end{lem}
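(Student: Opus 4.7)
The plan is to build, for each structure $\stra$, a bipartite graph $G(\stra)=(V,E)$ with $|V|=(m+3)|A|+8m+2$ vertices that faithfully encodes the $m$ binary relations on $A$ using the single symmetric edge relation $E$, together with a translation $\Phi\mapsto\Phi'$ that (i) asserts the correctness of the encoding by a validity conjunct, (ii) replaces each atomic formula $R_i(x,y)$ by an $\fo^k[E]$ formula defining it through the encoding, and (iii) relativizes the quantifiers of $\Phi$ to the definable set of ``element vertices''. The key constraint throughout is that every auxiliary predicate I introduce must be expressible with only the same $k\geq 3$ reusable variables.

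First, I would set up the gadget. The elements of $A$ form one color class of the bipartition. For each $a\in A$ I would attach $m+2$ auxiliary vertices arranged so that the information ``which relations point from $a$ to which other elements'' can be read off: namely $m$ role vertices $a^1,\ldots,a^m$ joined exactly to those $b$ with $R_i(a,b)$, together with a small constant number of tag vertices used to tie the role vertices back to $a$ itself. On top of this I would add a fixed rigid configuration of $8m+2$ landmark vertices (for example a short chain of vertices with strictly increasing local degree profiles up to three, plus $m$ dedicated role landmarks $\lambda_1,\ldots,\lambda_m$), designed so that each landmark is the unique vertex of $V$ satisfying a specific $\fo^3[E]$ formula. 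With the landmarks pinned down, the unary predicates ``$v$ is a role-$i$ vertex'', ``$v$ is an element vertex'', and ``$v$ is a tag vertex'' become $\fo^3[E]$-definable, as does the binary predicate ``$x$ owns the role vertex $z$''.

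Second, I would carry out the translation. Atomic formulas are replaced by
\[
\psi_i(x,y)\;\equiv\;\exists z\bigl(\mathrm{role}_i(z)\wedge \mathrm{ownedby}(z,x)\wedge E(z,y)\bigr),
\]
and $\Phi'=\mathrm{Valid}\wedge\widetilde\Phi$ is obtained from $\Phi$ by relativizing every quantifier to the element vertices and substituting $\psi_i$ for each atom $R_i$; the standard reuse-of-variables rewriting ensures that no substitution increases the variable count beyond $k$. I would then verify both directions of Lemma~\ref{lem:preserve-k-var}. From any $\stra\models\Phi$ the construction gives $G(\stra)\models\Phi'$ of the stated size, bipartite by construction. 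Conversely, any $G\models\Phi'$ is forced by $\mathrm{Valid}$ into the shape of some $G(\stra)$: the $8m+2$ landmarks are uniquely pinned down, the remaining vertices split uniquely into $|A|$ element vertices and their $(m+2)|A|$ auxiliary neighbours, and the edges among them decode the relations $R_i^\stra$, giving $\stra\models\Phi$ with $|A|=(|V|-8m-2)/(m+3)$.

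The main obstacle is keeping every auxiliary predicate definable in only three reusable variables. Saying ``$x$ owns the role vertex $z$'' inevitably chases a bounded-length path across the bipartition, and the gadget must be rigid enough that at each hop the vertex currently held in a variable can be pinned down by its landmark-neighbourhood alone. Designing the landmarks and tag vertices so that all of these predicates, together with the bipartite side-labelling, fit into three reusable variables is where the technical work lies; the number $8m+2$ is precisely what this bookkeeping forces. Once the gadget is in place, the semantic correspondence and the spectrum identity of Theorem~\ref{theo:main} (with $p=m+3$ and $q=8m+2$) follow by a routine induction on the structure of $\Phi$.
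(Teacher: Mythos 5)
Your overall strategy is the paper's: a fixed rigid component on $8m+2$ vertices, an $(m+3)$-vertex gadget per element of $A$, vertex types defined in $\fo^3[E]$ by adjacency to the rigid component, each atom $R_i(a,b)$ encoded by a single edge incident to a ``role'' vertex, and $\Phi'$ obtained as a validity conjunct together with a variable-reusing translation of $\Phi$ whose quantifiers are relativized to the element vertices. Your arrangement differs only cosmetically (you put the encoding edge directly between $a$'s role vertex and the element vertex $b$ and chase ownership through two tag vertices on the source side, where the paper puts it between $i^{R_l}$ and an auxiliary $S$-vertex of $j$ reached through a $P$--$Q$--$S$ chain on the target side); both can be made parity-consistent with a bipartition, so this is not where the difficulty lies.

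The genuine gap is that the step you defer as ``where the technical work lies'' is essentially the whole proof. First, your one concrete design suggestion for the rigid component --- a chain of landmarks ``with strictly increasing local degree profiles up to three,'' each the \emph{unique} vertex satisfying some $\fo^3[E]$ formula --- fails as stated: the landmarks serving as attachment points acquire degree growing with $|A|$, and ``local degree within the landmark subgraph'' is not available until the landmarks are already defined, which is circular. The paper bootstraps instead from an absolute feature (the spine $U$ is ``degree $\geq 2$ with exactly one neighbour of degree $1$,'' using the pendant vertices $w_i$ as markers), and even then, since $E$ is symmetric, the spine is only determined up to the mirror symmetry $u_i\leftrightarrow u_{4m+2-i}$, so unique pinning of each landmark is not achievable and all type definitions must be invariant under the flip --- a constraint your sketch ignores. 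Second, the converse direction of the lemma needs a three-variable validity sentence that \emph{forces} every model to decompose into one copy of $\cC$ plus disjoint copies of $\cD$ with no edges other than the prescribed gadget edges and the encoding edges; this is the content of the paper's $\Psi_{\cP_1},\ldots,\Psi_{\cP_6}$ (bounded-distance formulas $\Psi_{U,n}$, exactly two leaves, diameter and length $4m$, per-gadget uniqueness, and the restriction on cross-gadget edges), none of which appears in your argument beyond the word ``Valid.'' Finally, the assertion that $8m+2$ ``is precisely what this bookkeeping forces'' is not derived from your design; in the paper it is simply the size of the chosen gadget $\cC$ (a path on $4m+1$ vertices, each with a pendant neighbour), sized so that the $m+3$ attachment positions are distinct and parity-compatible with bipartiteness. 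Until these formulas and the forcing argument are supplied, the second bullet of Lemma~\ref{lem:preserve-k-var} (every model of $\Phi'$ comes from some $\stra\models\Phi$ of the right size, and is bipartite) is unproved.
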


The rest of this section is devoted to the proof of Lemma~\ref{lem:preserve-k-var}.
We fix a sentence $\Phi \in\fo^k[R_1,\ldots,R_m]$, and
we assume that $z_1,\ldots,z_k$ are the variables used in $\Phi$.
Without loss of generality, we also assume that $m\geq 3$.
Moreover, we assume that $\Phi$ implies $\forall x \neg R(x,x)$,
for every $R\in \{R_1,\ldots,R_m\}$.
That is, in every model $\stra\models \Phi$,
every relation $R^{\stra}$ does not contain self-loop.
Note that self-loops can be represented by non self-loops, i.e.,
by adding a new binary relation $R'$ for each $R\in \{R_1,\ldots,R_m\}$
and replacing every atomic formula $R(x,y)$ with $R(x,y)\vee \exists y R'(x,y)$.
The intuition is that in every model $\stra\models \Phi$,
a self-loop $(u,u)\in R^{\stra}$ is represented by $(u,v)\in {R'}^{\stra}$ for some $v\neq u$.

We will first describe the main idea of our proof.
The details will be presented immediately after.
Let $\cC$ be the graph depicted in Figure~\ref{fig:C}.
It has $8m+2$ vertices, denoted by $u_1,\ldots,u_{4m+1}$ and $w_1,\ldots,w_{4m+1}$,
with the $u_i$'s being those on the left hand side, and the $w_i$'s being those on the right hand side.
The edges are $(u_i,w_i)$, for each $i=1,\ldots,4m+1$, and
$(u_i,u_{i+1})$, for each $i=1,\ldots,4m$.
Throughout this paper, we will always write $U$ and $W$ to denote
the sets $\{u_1,\ldots,u_{4m+1}\}$ and $W=\{w_1,\ldots,w_{4m+1}\}$, respectively.

\begin{figure}

\begin{picture}(360,150)(-300,-50)

\put(-230,20){$\cC:$}

\put(-110,70){\circle*{3}}
\put(-122,68){\footnotesize $u_1$}
\put(-110,67.5){\line(0,-1){15}}
\put(-107.5,70){\line(1,0){95}}

\put(-110,50){\circle*{3}}
\put(-122,48){\footnotesize $u_2$}
\put(-107.5,50){\line(1,0){95}}

\put(-10,70){\circle*{3}}
\put(-5,68){\footnotesize $w_1$}

\put(-10,50){\circle*{3}}
\put(-5,48){\footnotesize $w_2$}

\multiput(-110,45)(0,-5){10}{\circle*{0.3}}
\multiput(-10,45)(0,-5){10}{\circle*{0.3}}

\put(-110,-5){\circle*{3}}
\put(-129,-8){\footnotesize $u_{4m}$}
\put(-110,-7.5){\line(0,-1){15}}
\put(-107.5,-5){\line(1,0){95}}

\put(-110,-25){\circle*{3}}
\put(-140,-27){\footnotesize $u_{4m+1}$}
\put(-107.5,-25){\line(1,0){95}}

\put(-10,-5){\circle*{3}}
\put(-5,-7){\footnotesize $w_{4m}$}

\put(-10,-25){\circle*{3}}
\put(-5,-27){\footnotesize $w_{4m+1}$}

\end{picture}

\caption{The graph $\cC$ with $8m+2$ vertices and $8m+1$ edges.}
\label{fig:C}
\end{figure}

Let $\cD$ be the graph depicted in Figure~\ref{fig:D}.
It has $m+3$ vertices and $m+2$ edges.
The vertices are denoted by $d^P, d^Q, d^S, d^{R_1},\ldots,d^{R_m}$,
where $d^P$ is adjacent to all of $d^Q,d^{R_1},\ldots,d^{R_m}$
and $d^Q$ is adjacent to $d^S$.

\begin{figure}
\begin{picture}(360,110)(-70,-5)

\put(10,45){$\cD:$}

\put(120,70){\circle*{3}}
\put(104,68){\footnotesize $d^{P}$}
\put(120,67.5){\line(0,-1){20}}

\put(120,45){\circle*{3}}
\put(104,43){\footnotesize $d^{Q}$}
\put(120,42.5){\line(0,-1){20}}

\put(120,20){\circle*{3}}
\put(104,18){\footnotesize $d^{S}$}

\qbezier(122.5,70)(170,75)(217.5,80)
\put(220,80){\circle*{3}}
\put(225,78){\footnotesize $d^{R_1}$}

\qbezier(122.5,70)(170,65)(217.5,60)
\put(220,60){\circle*{3}}
\put(225,58){\footnotesize $d^{R_2}$}

\multiput(220,50)(0,-5){6}{\circle*{0.3}}

\qbezier(122.5,70)(170,45)(217.5,20)
\put(220,18){\circle*{3}}
\put(225,16){\footnotesize $d^{R_m}$}

\end{picture}

\caption{The graph $\cD$ with $m+3$ vertices and $m+2$ edges.}
\label{fig:D}
\end{figure}

Our intention is to construct $\Phi'$ such that 
every model $\stra \models \Phi$ with $A=\{1,\ldots,n\}$
is represented by a graph $G=(V,E) \models \Phi'$, where
there is a partition $V=V_0\cup V_1\cup \cdots \cup V_n$ and the following holds.
\begin{itemize}\itemsep=0pt
\item 
$G[V_0]$ is isomorphic to $\cC$.
\item
$G[V_i]$ is isomorphic to $\cD$, for each $i=1,\ldots,n$.
\end{itemize}
Intuitively, each element $i \in A$ is represented by $G[V_i]$.
For simplicity, we will assume that $G[V_0]$ is $\cC$ itself, i.e., $V_0=U\cup W$.
We also denote the vertices in $V_i$ by $i^{P},i^Q, i^S, i^{R_1},\ldots,i^{R_m}$
which correspond respectively to vertices $d^P, d^Q, d^S, d^{R_1},\ldots,d^{R_m}$ in $\cD$.
Each tuple $(i,j)\in R_l^{\stra}$ will then be represented by the edge $(i^{R_l},j^S)$ in $G$.
See Figure~\ref{fig:rep-Rel} for an illustration.

\begin{figure}[ht!]

\begin{picture}(360,90)(110,5)

\put(120,70){\circle*{3}}
\put(104,68){\footnotesize $i^{P}$}
\put(120,67.5){\line(0,-1){20}}

\put(120,45){\circle*{3}}
\put(104,43){\footnotesize $i^{Q}$}
\put(120,42.5){\line(0,-1){20}}

\put(120,20){\circle*{3}}
\put(104,18){\footnotesize $i^{S}$}

\qbezier(122.5,70)(170,75)(217.5,80)
\put(220,80){\circle*{3}}
\put(225,78){\footnotesize $i^{R_1}$}

\multiput(220,72)(0,-5){4}{\circle*{0.3}}

\qbezier(122.5,70)(170,60)(217.5,50)
\put(220,50){\circle*{3}}
\put(225,52){\footnotesize $i^{R_l}$}

\qbezier(222,49)(290,35)(358,21)

\multiput(220,42)(0,-5){4}{\circle*{0.3}}

\qbezier(122.5,70)(170,45)(217.5,20)
\put(220,18){\circle*{3}}
\put(225,16){\footnotesize $i^{R_m}$}

% % % % % % % % % % % % % % % % % % % % % % % % % % % % % % %

\put(360,70){\circle*{3}}
\put(344,68){\footnotesize $j^{P}$}
\put(360,67.5){\line(0,-1){20}}

\put(360,45){\circle*{3}}
\put(344,43){\footnotesize $j^{Q}$}
\put(360,42.5){\line(0,-1){20}}

\put(360,20){\circle*{3}}
\put(348,10){\footnotesize $j^{S}$}

\qbezier(362.5,70)(410,75)(457.5,80)
\put(460,80){\circle*{3}}
\put(465,78){\footnotesize $j^{R_1}$}

\qbezier(362.5,70)(410,65)(457.5,60)
\put(460,60){\circle*{3}}
\put(465,58){\footnotesize $j^{R_2}$}

\multiput(460,50)(0,-5){6}{\circle*{0.3}}

\qbezier(362.5,70)(410,45)(457.5,20)
\put(460,18){\circle*{3}}
\put(465,16){\footnotesize $j^{R_m}$}

\end{picture}

\caption{Representing a tuple $(i,j)\in R_l^{\stra}$ with the edge $(i^{R_l},j^S)$ in $G$.}
\label{fig:rep-Rel}
\end{figure}
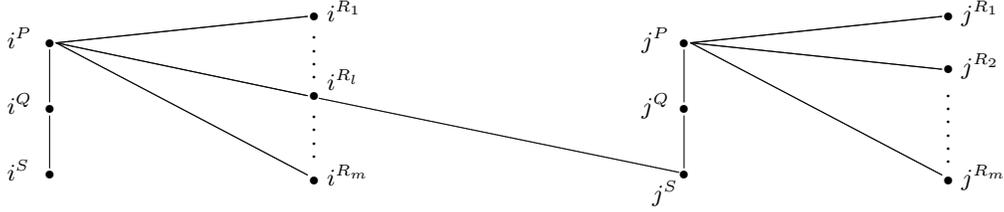

In order to achieve our intention, we differentiate 
the vertices $i^P, i^S,i^{R_1},\ldots,i^{R_m}$ by defining
them according to their connections with the vertices in $U$.
Of course, the vertices in $U$ have to be definable, as well.

We first declare the definition of the set $U$.
\begin{quote}
\begin{itemize}
\item[{($\cF_1$)}]
A vertex $u \in U$ if and only if it has degree at least 2 and exactly one of its neighbour has degree 1.
\end{itemize}
\end{quote}
The following are the properties of the set $U$ to be satisfied.
\begin{quote}
\begin{itemize}
\item[{($\cP_1$)}]
Every vertex of degree $1$ is adjacent to a vertex in $U$.
\item[{($\cP_2$)}]
There are exactly two vertices in $U$ that are adjacent to exactly one vertex in $U$.
More formally, $|X|=2$, where $X$ is the following set.
\begin{eqnarray*}
X & = & \{u \in U \mid \textrm{there is exactly one vertex} \ v \in U \ \textrm{s.t.}\ (u,v)\in E \}
\end{eqnarray*}
\item[{($\cP_3$)}]
Vertices in $U$ form a tree with diameter $\leq 4m$.
\item[{($\cP_4$)}]
Between the two vertices in the set $X$,
there is a path $\wp$ of length $4m$ that consists of only vertices in $U$.
\end{itemize}
\end{quote}
Property $\cP_1$ states that every vertex of degree $1$ is adjacent to one in $U$.
Properties $\cP_2$ and $\cP_3$ state that 
the vertices in $U$ form a tree with exactly two leaf nodes and diameter at most $4m$,
which implies that it is a line graph.
Property $\cP_4$ states that the line graph has exactly $4m+1$ vertices.

We will show that $\cF_1$ and $\cP_1$--$\cP_4$ can be defined
with first-order formulas using only three variables.
Moreover, we will also show that for every graph $G=(V,E)$ that satisfies $\cP_1$--$\cP_4$
with the set $U$ being defined as in $\cF_1$,
there is a subset $V_0\subseteq V$ such that the following holds.
\begin{itemize}\itemsep=0pt
\item
$G[V_0]$ is isomorphic to $\cC$.
\item
If a vertex $v\in V$ is either of degree $1$ or such that $v \in U$,
then $v \in V_0$.
\end{itemize}

Now, if we assume that $V_0 = U \cup W$,
and if we denote the vertices in $U$ by $u_1,\ldots,u_{4m+1}$,
we can define $u_1$ and $u_{4m+1}$ as the end vertices of the line graph $G[U]$,
whereas for each $i=2,\ldots,4m$, vertex $u_i$ is defined as the vertex with distance $i-1$ and $4m+1-i$ to $u_1$ and $u_{4m+1}$, respectively.
At this point, note that since we insist the interpretation of $E$ to be symmetric, 
our definition does not distinguish between $u_i$ and $u_{4m+2-i}$, for each $i=1,\ldots,4m+1$.

The following are the definitions of the vertices $i^P,i^Q, i^S,i^{R_1},\ldots,i^{R_m}$.
\begin{quote}
\begin{itemize}
\item[{($\cF_2$)}]
A vertex $u \in \{1^P,\ldots,n^P\}$ if and only if it is adjacent to exactly one of $u_1$ or $u_{4m+1}$,
and it is not adjacent to any other vertex in $U$.
\item[{($\cF_3$)}]
A vertex $u \in \{1^Q,\ldots,n^Q\}$ if and only if it is adjacent to exactly one of $u_2$ or $u_{4m}$,
and it is not adjacent to any other vertex in $U$.
\item[{($\cF_4$)}]
A vertex $u \in \{1^S,\ldots,n^S\}$ if and only if it is adjacent to exactly one of $u_3$ or $u_{4m-1}$,
and it is not adjacent to any other vertex in $U$.
\item[{($\cF_5$)}]
For each $R_l \in \{R_1,\ldots,R_m\}$,
a vertex $u \in \{1^{R_l},\ldots,n^{R_{l}}\}$ if and only if 
it is adjacent to exactly one of $u_{2l-1}$ or $u_{4m+1-2(l-1)}$,
and it is not adjacent to any other vertex in $U$.
\end{itemize}
\end{quote}
Again, we will show that all of them can be defined with first-order formulas
using only three variables.

Finally, to facilitate a correct representation of each relation $R_l$ 
with $\fo[E]$ formulas, we declare the following additional properties,
which can also be defined using only three variables.
\begin{quote}
\begin{itemize}
\item[{($\cP_{5}$)}]
The vertices $i^P,i^Q, i^S,i^{R_1},\ldots,i^{R_m}$
form a graph that is isomorphic to $\cD$ via the mapping
$(i^P,i^Q, i^S,i^{R_1},\ldots,i^{R_m})\mapsto (d^P,d^Q, d^S,d^{R_1},\ldots,d^{R_m})$.
\item[{($\cP_{6}$)}]
If there is an edge between the vertices in
$\{i^P,i^Q, i^S,i^{R_1},\ldots,i^{R_m}\}$ and those in $\{j^P,j^Q, j^S,j^{R_1},\ldots,j^{R_m}\}$,
where $i\neq j$,
then it is an edge between $i^{R_l}$ and $j^S$, for some $R_l \in \{R_1,\ldots,R_m\}$.
\end{itemize}
\end{quote}

With the definitions of vertices as in $\cF_1$--$\cF_5$,
we will show that for every graph $G=(V,E)$ that satisfies properties $\cP_1$--$\cP_{6}$,
there is a partition $V=V_0\cup V_1 \cup \cdots \cup V_n$ such that the following holds.
\begin{itemize}\itemsep=0pt
\item 
$G[V_0]$ is isomorphic to $\cC$.
\item
$G[V_i]$ is isomorphic to $\cD$, where $V_i =\{i^P,i^Q, i^S,i^{R_1},\ldots,i^{R_m}\}$, for each $i=1,\ldots,n$.
\item
If there is an edge between $V_i$ and $V_j$, for some $1 \leq i\neq j\leq n$,
then it is $(i^{R_l},j^S)$.
\end{itemize}
As mentioned earlier, each relation $R_l$ can then be encoded in $G$
by representing each tuple $(i,j)\in R_l^{\stra}$ with the edge $(i^{R_l},j^S)$ in $G$.

The rest of this section will be devoted to the details of the definitions of $\cF_1$--$\cF_{5}$ and $\cP_1$--$\cP_6$, 
as well as, the sentence $\Phi'$.
We divide them into five main steps.
The first step is for $\cF_1$ and $\cP_1$--$\cP_4$,
and the second step is for $\cF_2$--$\cF_5$.
The third and fourth step are for $\cP_{5}$ and $\cP_{6}$, respectively.
Finally, in the fifth step, we present the construction of the desired $\Phi'$,
where $\Phi'$ uses the same number of variables as $\Phi$.

\subsection*{Step~1: Three variable definitions for \texorpdfstring{$\cF_1$}{F\_1} and \texorpdfstring{$\cP_1$}{P\_1}--\texorpdfstring{$\cP_4$}{P\_4}.}
We will need a few auxiliary formulas.
They are all defined using three variables $x,y,z$,
which can be replaced with three arbitrary variables from among $z_1,\ldots,z_k$.

The formula $\Psi_{\deg=1}(x)$ below defines those with degree $1$.
\begin{eqnarray*}
\Psi_{\deg=1}(x) & := &
\exists y \ \Big[ E(x,y) \ \wedge \ \forall z \ \big[ E(x,z) \ \Rightarrow \ y=z\big] \Big]
\end{eqnarray*}
Next, the formula $\Psi_{U}(x)$ below defines vertices in $U$ as stated in $\cF_1$.
\begin{eqnarray*}
\Psi_{U}(x) & := &
\neg \Psi_{\deg=1}(x) \ \wedge \
\exists y \ \big[ E(x,y) \ \wedge \ \Psi_{\deg=1}(y) \big]
\end{eqnarray*}
That is, $\Psi_{U}(v)$ holds if and only if 
its degree is not 1 and it is adjacent to a vertex with degree $1$.
To avoid repetition, by abuse of terminology, when explaining the intuition of a formula,
we always write a set $U$ to mean the vertices on which $\Psi_U(x)$ holds.

We can define property $\cP_1$ with the following sentence.
\begin{eqnarray*}
\Psi_{\cP_1} & := &
\forall x \forall y\ \Big[ \big[\Psi_{\deg=1}(x) \ \wedge \ E(x,y)\big] \ \Rightarrow \ \Psi_U(y) \Big]
\end{eqnarray*}

To define the rest, we will need the following two auxiliary formulas.
\begin{itemize}\itemsep=0pt
\item
The formula $\Psi_{\fin,U}(x)$:
\begin{eqnarray*}
\Psi_{\fin,U} (x) & := & \Psi_U(x)\ \wedge\
\exists y \Big[ \Psi_U(y)\ \wedge\ E(x,y)\ \wedge\ \forall z \Big[
\big[\Psi_U(z)\ \wedge\ E(x,z)\big] \ \Rightarrow\ y=z\Big]   \Big]
\end{eqnarray*}
That is, $\Psi_{\fin,U}(v)$ holds if and only if
$v$ is in $U$ and adjacent to exactly one of the vertices in $U$.
This is intended to define the endpoints of the line graph formed by vertices in $U$.

\item
For an integer $n\geq 0$, the formula $\Psi_{U,n}(x,y)$:
\begin{eqnarray*}
\Psi_{U,0}(x,y) & := & x=y \ \wedge \ \Psi_{U}(x)
\\
\Psi_{U,n}(x,y) & := &
\Psi_{U}(x) \ \wedge \ \Psi_U(y) \ \wedge \
\exists z \ \Big[ \Psi_U(z)\ \wedge \ E(x,z) \ \wedge \ \Psi_{U,n-1}(z,y) \Big]
\end{eqnarray*}
That is, $\Psi_{U,n}(v_1,v_2)$ holds if and only if 
$\Psi_U(v_1),\Psi_U(v_2)$ hold and there is a path of length $n$
that consists of only vertices in $U$.
\end{itemize}
Now, the sentences $\Psi_{\cP_2}$, $\Psi_{\cP_3}$ and $\Psi_{\cP_4}$ that
define $\cP_2$, $\cP_3$ and $\cP_4$, respectively, are as follows.
\begin{eqnarray*}
\Psi_{\cP_2} & := & 
\exists x \exists y\ \Big[ \Psi_{\fin,U}(x) \ \wedge \ \Psi_{\fin,U}(y) \ \wedge \
\forall z \Big[\Psi_{\fin,U}(z)\ \Rightarrow\  \big[ z=x \ \vee \ z=y\big]\Big]\Big]
\\
\Psi_{\cP_3} & := & 
\forall x \forall y \
\Big[
\big[ \Psi_{U}(x) \ \wedge \ \Psi_{U}(y)\big] \ \Rightarrow \
\bigvee_{n=1}^{4m} \Psi_{U,n}(x,y)\Big] \ \wedge
\\
& &
\bigwedge_{n=1}^{4m}
\
\forall x \forall y\ 
\Big[
\big[
 \Psi_{U}(x) \ \wedge \ \Psi_{U}(y) \ \wedge \ \Psi_{U,n}(x,y)\big] \ \Rightarrow\ \bigwedge_{l\neq n \ \textrm{and} \ 1\leq l \leq 4m}
 \neg \Psi_{U,l}(x,y)\Big]
\\
\Psi_{\cP_4} & := & 
\forall x \forall y 
\Big[
\big[ \Psi_{\fin,U}(x) \ \wedge \ \Psi_{\fin,U}(y)\big] \ \Rightarrow \ \Psi_{U,4m}(x,y)\Big]
\end{eqnarray*}
Intuitively, the first line of $\Psi_{\cP_3}$ states that the vertices in $U$ form a graph with diameter $\leq 4m$,
while the second line states that the distance between two vertices in $U$ is unique.
Thus, $\Psi_{\cP_3}$ states that vertices in $U$ form a tree with diameter $\leq 4m$.
The sentence $\Psi_{\cP_4}$ states that distance between the two leaf nodes is $4m$.
Now, $\Psi_{\cP_2}$ states that there are only two leaf nodes.
So, altogether $\Psi_{\cP_2}\wedge \Psi_{\cP_3} \wedge \Psi_{\cP_4}$ states
that the set $U$ forms a line graph of $4m+1$ vertices.
Combining all these with $\Psi_{\cP_1}$,
we obtain that 
every model of $\Psi_{\cP_1}\wedge \cdots \wedge \Psi_{\cP_4}$
contains a subgraph isomorphic to $\cC$,
as stated formally below.

\begin{lem}
\label{lem:unique-C}
For every graph $G=(V,E)$, the following are equivalent.
\begin{enumerate}
\item[(a)]
$G\models \Psi_{\cP_1}\wedge \Psi_{\cP_2} \wedge \Psi_{\cP_3} \wedge \Psi_{\cP_4}$.
\item[(b)]
There is a subset $V' \subseteq V$ such that $G[V']$ is isomorphic to $\cC$.
Moreover, if a vertex $v\in V$ is either of degree $1$ or such that $\Psi_U(v)$ holds,
then $v \in V'$.
\end{enumerate}
\end{lem}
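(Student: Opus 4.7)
The plan is to handle the two directions separately, with (a)$\Rightarrow$(b) being the substantive one.

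For (b)$\Rightarrow$(a), take any $V'$ with $G[V']\cong\cC$ and use the moreover clause to identify the $U$-vertices in $G$ with the $u_i$'s of the isomorphism, and the degree-$1$ vertices of $G$ with the $w_i$'s. In $G[V']$ each $u_i$ has degree at least two and is adjacent to the degree-$1$ vertex $w_i$, so $\Psi_U(u_i)$ holds in $G$; conversely, the moreover clause precludes any $U$-vertex from lying outside $V'$. Once this identification is in place, the four properties $\cP_1$--$\cP_4$ become direct readings off the path $u_1$-$u_2$-$\cdots$-$u_{4m+1}$ and its degree-$1$ pendants.

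For (a)$\Rightarrow$(b), the central step is to extract from $\cP_2$--$\cP_4$ that $G[U]$ is a simple path on $4m+1$ vertices. The uniqueness-of-length clause in $\Psi_{\cP_3}$ forces the $U$-internal connections to behave like those of a tree (informally: no two $U$-walks of different lengths in the range $\{1,\ldots,4m\}$ can join the same pair of $U$-vertices); the first clause of $\Psi_{\cP_3}$ bounds the $U$-internal diameter by $4m$; $\Psi_{\cP_2}$ pins the number of $U$-leaves to exactly two; and $\Psi_{\cP_4}$ sets the distance between these two leaves to $4m$. Combining these, $G[U]$ is a simple path which I would label as $u_1,\ldots,u_{4m+1}$, subject to the reversal ambiguity noted in the paper. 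Using the defining clause of $\Psi_U$ I pick for each $u_i$ a degree-$1$ neighbor $w_i$ and set $V'=U\cup\{w_1,\ldots,w_{4m+1}\}$; then $G[V']\cong\cC$ by inspection of the edges forced on $V'$.

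The main obstacle is the moreover clause, which demands that $V'$ swallow every degree-$1$ vertex and every $U$-vertex of $G$. The $U$-vertices are in $V'$ by construction. For the degree-$1$ vertices, $\Psi_{\cP_1}$ ensures each one hangs off some $u_j\in U$, but to match the structure of $\cC$ exactly one must argue that each $u_j$ carries exactly one degree-$1$ pendant and that no other degree-$1$ vertex exists. This bookkeeping, matching the degree sequence of $G$ against that of $\cC$, is the step where I expect to invest the most care; once it is in hand, the choice of the $w_i$'s is forced and $V'$ simultaneously delivers the $\cC$-isomorphism and the inclusion demanded by the moreover clause.
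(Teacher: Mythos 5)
Your overall route is the same as the paper's: for (a)$\Rightarrow$(b) you extract from $\Psi_{\cP_2}\wedge\Psi_{\cP_3}\wedge\Psi_{\cP_4}$ that the $\Psi_U$-vertices induce a path on exactly $4m+1$ vertices, and then use $\Psi_{\cP_1}$ to hang the degree-$1$ vertices off that path; the (b)$\Rightarrow$(a) direction is dismissed by the paper as straightforward, at essentially the same level of detail as your sketch. The one place where you stop short is precisely the step you flag as ``bookkeeping'': that each $u_j$ carries exactly one degree-$1$ pendant and that no degree-$1$ vertex sits anywhere else. You leave this open, but it is not a counting argument requiring care --- it falls out of the definition of $U$. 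By $\cF_1$, a vertex is in $U$ only if \emph{exactly one} of its neighbours has degree $1$, and by $\Psi_{\cP_1}$ every degree-$1$ vertex is adjacent to some $U$-vertex; since a degree-$1$ vertex has a unique neighbour, the map sending each degree-$1$ vertex to that neighbour is injective into $U$ (two pendants at the same $u_j$ would violate the exactly-one clause) and surjective (every $U$-vertex has a degree-$1$ neighbour). So the degree-$1$ vertices are in bijection with the $4m+1$ path vertices, and the paper simply sets $V'=U'\cup W'$ with $W'$ the set of \emph{all} degree-$1$ vertices, obtaining the isomorphism with $\cC$ and the ``moreover'' clause in one stroke, whereas your choice of one pendant per $u_i$ forces you to argue the inclusion separately and leaves it unproved.

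A caution that validates your instinct: the step does genuinely hinge on the exactly-one clause of $\cF_1$. If you work only from the displayed formula for $\Psi_U$ (which asserts merely the existence of a degree-$1$ neighbour), then a path vertex with two pendant leaves satisfies all of $\Psi_{\cP_1},\ldots,\Psi_{\cP_4}$, yet no $V'$ inducing $\cC$ can contain all degree-$1$ vertices, so the inclusion in (b) cannot be derived. What closes your gap is therefore the definition of $U$ itself, invoked explicitly, not a comparison of degree sequences.
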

\begin{proof}
The  direction that (b) implies (a) is straightforward.
So we prove that (a) implies (b).
Assume that $G=(V,E) \ \models  \Psi_{\cP_1}\ \wedge\ \Psi_{\cP_2}\ \wedge\ \Psi_{\cP_3} \ \wedge \ \Psi_{\cP_4}$.

Let $U'$ be the set $\{u\in V \mid \Psi_U(u) \ \textrm{holds in} \ G\}$.
The sentence $\Psi_{\cP_3}$ implies that $G[U']$ is a tree of diameter $\leq 4m$, whereas
the sentence $\Psi_{\cP_2}$ implies that $G[U']$ has only two leaf nodes.
So, altogether, they imply that $G[U']$ is a line graph of at most $4m+1$ vertices.
The sentence $\Psi_{\cP_4}$ implies that it is a line graph with exactly $4m+1$ vertices.

Next, let $W'$ be the set $\{w \in V \mid \deg(w)=1\}$.
Thus, if we pick $V'=U'\cup W'$,
it follows immediately that $G[V']$ is isomorphic to $\cC$.
By $\Psi_{\cP_1}$, it is trivial that if $v\in V$ is such that either $\deg(v)=1$ or that $\Psi_U(v)$ holds,
then $v \in V'$.
\end{proof}

\subsection*{Step~2: Three variable definitions for \texorpdfstring{$\cF_2$}{F\_2}--\texorpdfstring{$\cF_5$}{F\_5}.}

The formulas $\Psi_{P}(x)$, $\Psi_{Q}(x)$, $\Psi_{S}(x)$ and $\Psi_{R_l}(x)$, for each $R_l \in \{R_1,\ldots,R_m\}$,
 below defines the vertices $i^P$'s, $i^Q$'s, $i^S$'s and $i^{R_l}$'s, respectively,
 as stated in $\cF_2$--$\cF_5$.
\begin{eqnarray*}
\Psi_{P}(x) &:= &
\exists y \Big[ 
\Psi_{\fin,U}(y) \ \wedge\
\forall z \Big[ \Psi_U(z)\ \Rightarrow \ \big[ E(x,z)\ \iff\ y=z\big]\Big]
\Big]
\\
\Psi_{Q}(x) &:= &
\exists y \Big[ 
\Psi_{\fin,U}(y) \ \wedge\
\forall z \Big[ \Psi_U(z)\ \Rightarrow \ \big[ E(x,z)\ \iff\ \Psi_{U,1}(y,z)\big]\Big]
\Big]
\\
\Psi_{S}(x) &:= &
\exists y \Big[ 
\Psi_{\fin,U}(y) \ \wedge\
\forall z \Big[ \Psi_U(z)\ \Rightarrow \ \big[ E(x,z)\ \iff\ \Psi_{U,2}(y,z)\big]\Big]
\Big]
\\
\Psi_{R_l}(x) &:= &
\exists y \Big[ 
\Psi_{\fin,U}(y) \ \wedge\
\forall z \Big[ \Psi_U(z)\ \Rightarrow \ \big[ E(x,z)\ \iff\ \Psi_{U,2l-1}(y,z)\big]\Big]
\Big]
\end{eqnarray*}

\subsection*{Step~3: Three variable definition for \texorpdfstring{$\cP_{5}$}{P\_5}.}

Intuitively, the sentence $\Psi_{\cP_{5}}$ that defines $\cP_{5}$ states the following: 
{\em For every vertex $x$ such that $\Psi_P(x)$ holds, there are vertices
$y,z,s_1,\ldots,s_m$ such that the following is true.}
\begin{itemize}\itemsep=0pt
\item
{\em $x,y,z,s_1,\ldots,s_m$ form a graph isomorphic to $\cD$.}
\item
{\em $\Psi_Q(y)$, $\Psi_S(z),\Psi_{R_1}(s_1),\ldots,\Psi_{R_m}(s_m)$ all hold.}
\end{itemize}
Such sentence can be trivially written using $m+3$ variables.
However, since each of the vertices $x,y,z,s_1,\ldots,s_m$ have distinguished definitions
and the distance between them are all bounded by a fixed length,
three variables are sufficient.

Before we proceed to the details, we need the following auxiliary formula.
For every $\alpha,\beta \in \{P,Q,S,R_1,\ldots,R_m\}$,
we define the following formula:
\begin{eqnarray*}
\Psi_{\alpha,\beta}(x,y) & := &
\Psi_{\alpha}(x) \ \wedge \ \Psi_{\beta}(y) \ \wedge \
\exists z \Big[ \Psi_{\gamma}(z) \ \wedge \ E(x,z) \ \wedge \ E(z,y)\Big],
\end{eqnarray*}
where $\gamma$ is defined according to $\alpha$ and $\beta$ as follows.
\begin{itemize}\itemsep=0pt
\item
$\gamma=Q$, when either $(\alpha,\beta)=(P,S)$ or $(\alpha,\beta)=(S,P)$.
\item
$\gamma=P$, when either $(\alpha,\beta)=(R_l,Q)$ or $(\alpha,\beta)=(Q,R_l)$, for some $R_l \in \{R_1,\ldots,R_m\}$.
\item
$\gamma=P$, for every $\alpha,\beta \in \{R_1,\ldots,R_m\}$ and $\alpha\neq \beta$.
\end{itemize}
We let $\gamma$ undefined for all the other combinations of $\alpha$ and $\beta$.
Intuitively, $\Psi_{\alpha,\beta}(x,y)$ indicates that $x$ and $y$
are the vertices in $\cD$ where $\Psi_{\alpha}$ and $\Psi_{\beta}$ hold, respectively,
and that $\Psi_{\gamma}$ holds in their middle vertex.

Now, the sentence $\Psi_{\cP_5}$ is the conjunction of the following sentences,
which for readability, are written in plain English.
\begin{itemize}\itemsep=0pt
\item
For every vertex $x$ such that $\Psi_P(x)$ holds, the following is true.
\begin{itemize}\itemsep=0pt
\item
$x$ is adjacent to exactly one vertex $y$ where $\Psi_Q(y)$ holds.
\item
For every $R_l \in \{R_1,\ldots,R_m\}$,
$x$ is adjacent to exactly one vertex $y$ where $\Psi_{R_l}(y)$ holds.
\item
There is exactly one vertex $y$ such that $\Psi_{P,S}(x,y)$ holds
and moreover, $E(x,y)$ does not hold.
\item
For every $R_l \in \{R_1,\ldots,R_m\}$,
if $y$ and $z$ are vertices adjacent to $x$ such that
$$
\Psi_Q(y) \quad \textrm{and}\quad \Psi_{R_l}(z)\quad \textrm{hold},
$$
then $E(y,z)$ does not hold.
\item
For every $R_l,R_{l'}\in \{R_1,\ldots,R_m\}$,
if $y$ and $z$ are vertices adjacent to $x$ such that
$$
\Psi_{R_{l}}(y) \quad \textrm{and}\quad \Psi_{R_{l'}}(z)\quad \textrm{hold},
$$
then $E(y,z)$ does not hold.
\end{itemize}

\item
For every vertex $x$ such that $\Psi_Q(x)$ holds, the following is true.
\begin{itemize}\itemsep=0pt
\item
$x$ is adjacent to exactly one vertex $y$ where $\Psi_P(y)$ holds.

\item
$x$ is adjacent to exactly one vertex $y$ where $\Psi_S(y)$ holds.

\item
For every $R_l \in \{R_1,\ldots,R_m\}$,
there is exactly one vertex $y$ such that $\Psi_{Q,R_l}(x,y)$ holds
and moreover, $E(x,y)$ does not hold.
\end{itemize}

\item
For every vertex $x$ such that $\Psi_S(x)$ holds, the following is true.
\begin{itemize}\itemsep=0pt
\item
$x$ is adjacent to exactly one vertex $y$ where $\Psi_Q(y)$ holds.

\item
There is exacly one vertex $y$ such that $\Psi_{S,P}(x,y)$ holds,
and moreover, $E(x,y)$ does not hold.

\item
If $y$ and $z$ are vertices such that
$$
\Psi_Q(y),\ E(x,y), \ \textrm{and}\ \Psi_{Q,R_l}(y,z) \ \textrm{hold, for some} \ R_l \in \{R_1,\ldots,R_m\},
$$
then $E(x,z)$ does not hold.
\end{itemize}

\item
For every $R_l \in \{R_1,\ldots,R_m\}$,
for every vertex $x$ such that $\Psi_{R_l}(x)$ holds, the following is true.
\begin{itemize}\itemsep=0pt
\item
$x$ is adjacent to exactly one vertex $y$ where $\Psi_P(y)$ holds.
\item
There is exactly one vertex $y$ such that $\Psi_{R_l,Q}(x,y)$ holds
and moreover, $E(x,y)$ does not hold.
\item
If $y$ and $z$ are vertices such that
$$
\Psi_P(y),\ E(x,y), \ \textrm{and}\ \Psi_{P,S}(y,z) \ \textrm{hold},
$$
then $E(x,z)$ does not hold.
\end{itemize}
\end{itemize}

Now, consider the following sentence.
\begin{eqnarray*}
\Psi_0 & := & \Psi_{\cP_1} \ \wedge \ \Psi_{\cP_2} \ \wedge \ \Psi_{\cP_3}\ \wedge \ \Psi_{\cP_4} \ \wedge \ \Psi_{\cP_5}
\ \wedge 
\\
& & 
\forall x \Big[\neg E(x,x)\Big] \ \wedge \
\forall x \Big[ \Psi_{\deg=1}(x) \ \vee \ \bigvee_{\alpha \in \{U,P,Q,S,R_1,\ldots,R_m\}} \Psi_{\alpha}(x) 
\Big].
\end{eqnarray*}
We have the following lemma.
\begin{lem}
\label{lem:partition}
For every graph $G=(V,E)\models \Psi_{0}$,
there is a partition $V=V_0\cup V_1\cup \cdots \cup V_n$ such that the following holds.
\begin{itemize}\itemsep=0pt
\item
$G[V_0]$ is isomorphic to $\cC$.
\item
For each $i=1,\ldots,n$, $G[V_i]$ is isomorphic to $\cD$, and
for every $\alpha \in \{P,Q,S,R_1,\ldots,R_m\}$,
there is exactly one node $v \in V_i$ such that $\Psi_{\alpha}(v)$ holds.
\end{itemize}
\end{lem}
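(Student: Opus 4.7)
The plan is to extract $V_0$ first and then carve the rest of the graph into pieces using the structure enforced by $\Psi_{\cP_5}$. Since $\Psi_0$ contains $\Psi_{\cP_1}\wedge\Psi_{\cP_2}\wedge\Psi_{\cP_3}\wedge\Psi_{\cP_4}$, Lemma~\ref{lem:unique-C} gives us a set $V_0\subseteq V$ with $G[V_0]\cong\cC$ and such that every degree-$1$ vertex and every vertex satisfying $\Psi_U$ lies in $V_0$. The final conjunct of $\Psi_0$ then forces each vertex of $V\setminus V_0$ to satisfy one of $\Psi_P,\Psi_Q,\Psi_S,\Psi_{R_1},\ldots,\Psi_{R_m}$. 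I would first check that these ``type formulas'' are pairwise mutually exclusive: each of them pins down the exact set of $U$-neighbors of $x$ (a single endpoint of $U$, or a vertex at a specific distance from an endpoint), and the distances $0,1,2,2l{-}1$ used in $\Psi_P,\Psi_Q,\Psi_S,\Psi_{R_l}$ are chosen to be distinct modulo the symmetry $u_i\leftrightarrow u_{4m+2-i}$ of $\cC$. So $V\setminus V_0$ splits cleanly into type classes $V_P,V_Q,V_S,V_{R_1},\dots,V_{R_m}$.

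Next, for each $x\in V_P$ I would use the $P$-clauses of $\Psi_{\cP_5}$ to harvest the $\cD$-companions of $x$: the unique $Q$-neighbor $y_Q$, the unique $R_l$-neighbor $y_{R_l}$ for each $l$, and the unique $S$-vertex $y_S$ with $\Psi_{P,S}(x,y_S)$ (realized via $y_Q$ as the ``middle'' vertex by construction of $\Psi_{P,S}$). I would then set
\[
V_{x} \;=\; \{x,\,y_Q,\,y_S,\,y_{R_1},\,\ldots,\,y_{R_m}\}
\]
and enumerate these as $V_1,\ldots,V_n$. The positive parts of $\Psi_{\cP_5}$ together with the defining adjacencies in $\Psi_{P,S}$ guarantee that $G[V_x]$ contains the edges of $\cD$; the negative ``$E(y,z)$ does not hold'' clauses (e.g.\ no edge from $y_Q$ to $y_{R_l}$, and no edge among the $y_{R_l}$'s) guarantee it contains no extra ones, so $G[V_x]\cong\cD$ under the intended labelling.

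The final and most delicate step is to show that the $V_x$'s, together with $V_0$, genuinely partition $V$. Covering is easy for $V_P$, and for $V_Q,V_S,V_{R_l}$ it follows from the respective ``starting from $Q$/$S$/$R_l$'' clauses of $\Psi_{\cP_5}$, each of which produces a $P$-witness. Disjointness is the main obstacle: I need that no vertex of type $Q$, $S$, or $R_l$ can lie in two different $V_x$'s. For $Q$- and $R_l$-vertices this follows directly from the uniqueness of their $P$-neighbor asserted by the $Q$- and $R_l$-clauses of $\Psi_{\cP_5}$. The $S$-case is subtler because an $S$-vertex is not adjacent to its $P$-partner; I would handle it by composing two uniqueness statements: the unique $Q$-neighbor of an $S$-vertex (from the $S$-clause) and the unique $P$-neighbor of that $Q$-vertex (from the $Q$-clause). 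Once these three uniqueness facts are combined with the already-established isomorphism of each $V_x$ with $\cD$, the lemma follows.
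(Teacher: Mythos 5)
Your proposal is correct and takes essentially the same route as the paper's proof: extract $V_0$ via Lemma~\ref{lem:unique-C}, use the final conjunct of $\Psi_0$ to type every remaining vertex, build a $\cD$-component around each $P$-vertex from the clauses of $\Psi_{\cP_5}$, and obtain covering and disjointness from the uniqueness clauses. Your more explicit treatment of disjointness and covering (in particular the $S$-case, handled by composing the uniqueness of the $Q$-neighbour of an $S$-vertex with the uniqueness of the $P$-neighbour of that $Q$-vertex) simply fills in details that the paper compresses into ``by $\Psi_{\cP_5}$ again.''
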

\begin{proof}
Let $G=(V,E)\models \Psi_0$.
Obviously, it does not contain any self-loop.
By Lemma~\ref{lem:unique-C}, there is $V_0$ such that $G[V_0]$ is isomorphic to $\cC$.
Let $K = \{v \in V\mid \Psi_{P}(v)\ \textrm{holds}\}$.
By $\Psi_{\cP_{5}}$, for every $v\in K$, there is a set of vertices $V_v = \{u_1^v,\ldots,u_{m+2}^v\}$
such that the following holds.
\begin{itemize}\itemsep=0pt
\item
$\Psi_{R_1}(u_1^v), \ldots,\Psi_{R_m}(u_m^v), \Psi_Q(u_{m+1}^v), \Psi_S(u_{m+2}^v)$ hold. 
\item
$G[\{v\}\cup V_v]$ is isomorphic to $\cD$.
\end{itemize}
Suppose $K = \{v_1,\ldots,v_n\}$.
By $\Psi_{\cP_{5}}$ again, we have that $\{v_i\}\cup V_{v_i}$ and $\{v_j\}\cup V_{v_j}$ are disjoint, whenever $v_i \neq v_j$.

Now, for every vertex $v\in V$, either $\deg(v)=1$ or
there is a $\alpha \in \{U,P,Q,S,R_1,\ldots,R_m\}$
such that $\Psi_{\alpha}(v)$ holds.
Moreover, it is not possible that 
$\Psi_{\alpha}(v)$ and $\Psi_{\beta}(v)$ hold, for different $\alpha,\beta \in  \{U,P,Q,S,R_1,\ldots,R_m\}$.
By Lemma~\ref{lem:unique-C}, if $v$ is of degree $1$ or that $\Psi_U(v)$ holds, then $u\in V_0$.
Otherwise, $v \in V_i$, for some $i=1,\ldots,n$.
Thus, $V$ is partitioned into $V_0\cup V_1\cup \cdots \cup V_n$.
This completes our proof.
\end{proof}

\subsection*{Step~4: Three variable definition for \texorpdfstring{$\cP_6$}{P\_6}.}

Before we define the sentence  for $\cP_{6}$,
we need the following terminology.
Let $G=(V,E)\models \Psi_0$.
We say that two vertices {\em $u,v\in V$ are in the same $\cD$-component},
if there is $V'\subseteq V$ such that the following holds.
\begin{itemize}\itemsep=0pt
\item
$u,v \in V'$.
\item
$G[V']$ is isomorphic to $\cD$.
\item
For every $\alpha \in \{P,Q,S,R_1,\ldots,R_m\}$,
there is exactly one $w \in V'$ such that $\Psi_{\alpha}(w)$ holds.
\end{itemize}
We can define a three-variable formula $\Psi_{\comp}(x,y)$ such
that $\Psi_{\comp}(x,y)$ holds if and only if $x$ and $y$ are in the same $\cD$-component.
This can be done as follows.
Suppose $\alpha = S$ and $\beta=R_l$,
and that $\Psi_{\alpha}(x)$ and $\Psi_{\beta}(y)$ hold.
Then, $x$ and $y$ are in the same $\cD$-component
is equivalent to stating that {\em there is $z$ such that $E(x,z)$, $\Psi_Q(z)$ and $\Psi_{Q,R_l}(z,y)$ hold.}
We can enumerate similar formulas for every possible $\alpha$ and $\beta$,
and conjunct them all to obtain a formula $\Psi_{\comp}(x,y)$
that asserts whether $x$ and $y$ are in the same $\cD$-component.

Now, the sentence $\Psi_{\cP_{6}}$ that defines $\cP_{6}$
states as follows.
For every adjacent vertices $x$ and $y$,
if they are not in the same $\cD$-component,
then for some $R_l \in \{R_1,\ldots,R_m\}$,
either one of the following holds.
\begin{itemize}\itemsep=0pt
\item
$\Psi_S(x)$ and $\Psi_{R_l}(y)$ hold.
\item
$\Psi_{R_l}(x)$ and $\Psi_{S}(y)$ hold.
\end{itemize}

The following lemma is immediate from Lemma~\ref{lem:partition}
and the intended meaning of $\Psi_{\cP_{6}}$.

\begin{lem}
\label{lem:representation}
For every graph $G=(V,E)$,
if $G\models \Psi_0 \wedge \Psi_{\cP_{6}}$,
then $V$ can be partitioned into $V=V_0\cup V_1 \cup \ldots\cup V_n$ such that
the following holds.
\begin{itemize}\itemsep=0pt
\item 
$G[V_0]$ is isomorphic to $\cC$.
\item
For each $i=1,\ldots,n$, $G[V_i]$ is isomorphic to $\cD$, and
for every $\alpha \in \{P,Q,S,R_1,\ldots,R_m\}$,
there is exactly one node $v \in V_i$ such that $\Psi_{\alpha}(v)$ holds.
\item
If there is an edge $(u,v)$ such that $u \in V_i$ and $v\in V_j$, for some $1\leq i \neq j \leq n$,
then either $\Psi_S(v), \Psi_{R_l}(u)$ hold or $\Psi_S(u), \Psi_{R_l}(v)$ hold, for some $R_l \in \{R_1,\ldots,R_m\}$.
\end{itemize}
\end{lem}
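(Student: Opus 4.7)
The plan is to derive the conclusion almost directly by combining Lemma~\ref{lem:partition} (which already handles the first two bullet points) with the semantic content of $\Psi_{\cP_6}$ (which handles the third). So I would start by invoking Lemma~\ref{lem:partition} on the assumption $G\models \Psi_0$ to obtain a partition $V=V_0\cup V_1 \cup \cdots \cup V_n$ where $G[V_0]\cong \cC$ and each $G[V_i]\cong \cD$ with a unique vertex for each label $\alpha \in \{P,Q,S,R_1,\ldots,R_m\}$. This immediately settles the first two bullets.

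For the third bullet, I would fix any edge $(u,v)$ with $u\in V_i$ and $v\in V_j$ for $i\neq j$ and both indices in $\{1,\ldots,n\}$. Since $V_i$ and $V_j$ are distinct $\cD$-components (in the precise sense defined just before the lemma), the formula $\Psi_{\comp}(u,v)$ does not hold in $G$. I would then apply $\Psi_{\cP_6}$, which is precisely the sentence asserting that any two adjacent vertices not in the same $\cD$-component must realize the labels $(S,R_l)$ or $(R_l,S)$ for some $R_l \in \{R_1,\ldots,R_m\}$. This yields the required conclusion on the endpoints of the edge.

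The only minor subtlety I expect is a bookkeeping verification that a vertex in $V_i$ (for $i\geq 1$) cannot accidentally lie also in $V_0$ or simultaneously carry two distinct labels; but this is already settled by the disjointness of the partition constructed in Lemma~\ref{lem:partition} and by the last clause of $\Psi_0$, which forces each vertex of $V$ to satisfy exactly one of $\Psi_{\deg=1},\Psi_U,\Psi_P,\Psi_Q,\Psi_S,\Psi_{R_1},\ldots,\Psi_{R_m}$. Thus no genuine obstacle arises, and the proof is essentially a short assembly of Lemmas~\ref{lem:unique-C} and~\ref{lem:partition} with the defining property of $\Psi_{\cP_6}$. Given that the author explicitly says the lemma is immediate, the proof proposal should be kept brief, emphasizing the reduction to $\Psi_{\comp}$ for the cross-component case.
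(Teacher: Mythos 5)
Your proposal is correct and follows essentially the same route as the paper, which states the lemma as immediate from Lemma~\ref{lem:partition} (giving the first two bullets) combined with the intended meaning of $\Psi_{\cP_6}$ via $\Psi_{\comp}$ (giving the third bullet for cross-component edges). Your added remark about uniqueness of labels and disjointness is consistent with what $\Psi_0$ and Lemma~\ref{lem:partition} already guarantee, so no new argument is needed.
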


Note also that every graph $G=(V,E)$ that satisfies $\Psi_0\wedge \Psi_{\cP_{6}}$ is indeed a bipartite graph.
Using the same notation as in Lemma~\ref{lem:partition},
we  assume that $G[V_0]$ is $\cC$ itself.
Furthermore, we also denote by $V_i = \{i^P,i^Q,i^S,i^{R_1},\ldots,i^{R_l}\}$,
where the mapping $(i^P,i^Q,i^S,i^{R_1},\ldots,i^{R_m})\mapsto (d^P,d^Q,d^S,d^{R_1},\ldots,d^{R_m})$
is an isomorphism from $G[V_i]$ to $\cD$.
Then, $G$ is a bipartite graph with the partition $V= V'\cup V''$, where
\begin{eqnarray}
\label{eq:Va}
V' & = & \{u_1,u_3,\ldots,u_{4m+1}\}\cup \{w_2,w_4,\ldots,w_{4m}\} \cup \{i^Q,i^{R_1},\ldots,i^{R_l}\mid i = 1,\ldots,n\}
\\
\label{eq:Vb}
V'' & = & \{u_2,u_4,\ldots,u_{4m}\}\cup \{w_1,w_3,\ldots,w_{4m+1}\} \cup \{i^P,i^S\mid i = 1,\ldots,n\}
\end{eqnarray}

\subsection*{Step~5: The construction of \texorpdfstring{$\Phi'$}{Phi'}.}

First, for each formula $\varphi(\vz)$ of $\Phi$, where $\vz=(z_1,\ldots,z_t)$ and $t\geq 3$,
we construct $\widetilde{\varphi}(\vz)$ with the same free variables $\vz$
inductively as follows.
\begin{description}
\item[Base case]
$\varphi(\vz)$ is an atomic formula $R_l(x,y)$, i.e., $\vz=(x,y)$ and $x,y \in \{z_1,\ldots,z_t\}$.
Then,
\begin{eqnarray*}
\widetilde{\varphi}(x,y) & := & \Psi_P(x) \ \wedge \ \Psi_P(y) \ \wedge
\\
& &
\exists z 
\arraycolsep=1.4pt\def\arraystretch{1.6}
\left[
\begin{array}{l}
\Psi_Q(z)\  \wedge \ E(y,z)\  \wedge 
\\
\exists y
\left[
\begin{array}{l}
\Psi_S(y) \ \wedge \
E(z,y)\  \wedge 
\
\exists z 
\left[
\begin{array}{l}
\Psi_{R_l}(z)\ \wedge
\
E(x,z)\   \wedge\ E(y,z)  
\end{array}
\right]
\end{array}
\right]
\end{array}
\right]
\end{eqnarray*}
The variable $z$ is such that $z\in \{z_1,\ldots,z_t\}$ and $z\neq x,y$.
Note also that variables $y$ and $z$ are being reused.

The intuitive meaning of $\widetilde{\varphi}(x,y)$ is as follows.
Assuming that $\Psi_P(x)$ and $\Psi_P(y)$ hold,
$\widetilde{\varphi}(x,y)$ states that there are three vertices $v,v',v''$ such that
the following holds.
\begin{itemize}\itemsep=0pt
\item
$\Psi_Q(v)$, $\Psi_S(v')$, $\Psi_{R_l}(v'')$ hold.
\item
$(y,v),(v,v'),(x,v'')$ and $(v'',v')$ are edges.
\end{itemize}

In a similar way, when $\varphi(\vz)$ is an atomic formula $x=y$, then,
\begin{eqnarray*}
\widetilde{\varphi}(x,y) & := & \Psi_P(x) \ \wedge \ \Psi_P(y) \ \wedge \ x=y.
\end{eqnarray*}

\item[Induction step]
\begin{eqnarray*}
\widetilde{\varphi}(\vz) & := &
\arraycolsep=1.4pt\def\arraystretch{1.2}
\left\{
\begin{array}{ll}
\widetilde{\varphi}_1(\vz) \wedge \widetilde{\varphi}_2(\vz),
&\qquad \mbox{if} \ \varphi(\vz)\ \mbox{is} \ \varphi_1(\vz)\wedge \varphi_2(\vz)
\\
\neg \widetilde{\varphi}_1(\vz),
&\qquad \mbox{if} \ \varphi(\vz) \ \mbox{is}\ \neg\varphi_1(\vz)
\\
\exists x \ \Psi_P(x) \wedge \widetilde{\varphi}_1(x,\vz),
&\qquad \mbox{if} \ \varphi(\vz)\ \mbox{is}\ \exists x \ \varphi_1(x,\vz)
\end{array}
\right.
\end{eqnarray*}
\end{description}
Note that $\Phi'$ uses the same number of variables as $\Phi$.

We have the following lemma which states that
$\Phi$ and $\Phi'$ are equi-satisfiable.
 
\begin{lem}
\label{lem:final}
For every formula $\varphi(z_1,\ldots,z_t)\in \fo^k[R_1,\ldots,R_m]$,
the following holds.
\begin{itemize}\itemsep=0pt
\item
For every structure $\stra = \langle A, R_1^{\stra},\ldots,R_m^{\stra}\rangle$,
for every $i_1,\ldots,i_t\in A$ such that 
\begin{eqnarray*}
\stra & \models & \varphi(i_1,\ldots,i_t),
\end{eqnarray*}
there is a graph $G=(V,E)$ and $u_1,\ldots,u_t\in V$ such that
\begin{eqnarray*}
G & \models & \Psi_0 \ \wedge \ \Psi_{\cP_6}\ \wedge \ \widetilde{\varphi}(u_1,\ldots,u_t).
\end{eqnarray*}

\item
Vice versa, 
for every graph $G=(V,E)$ and for every $u_1,\ldots,u_t\in V$ such that
\begin{eqnarray*}
G & \models & \Psi_0 \ \wedge \ \Psi_{\cP_6}\ \wedge \ \widetilde{\varphi}(u_1,\ldots,u_t),
\end{eqnarray*}
there is a structure $\stra = \langle A, R_1^{\stra},\ldots,R_m^{\stra}\rangle$
and $i_1,\ldots,i_t\in A$ such that 
\begin{eqnarray*}
\stra & \models & \varphi(i_1,\ldots,i_t).
\end{eqnarray*}
\end{itemize}
\end{lem}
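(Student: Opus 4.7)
The plan is to prove both directions simultaneously by structural induction on $\varphi$, after fixing a canonical correspondence between first-order structures $\stra$ and graphs $G \models \Psi_0 \wedge \Psi_{\cP_6}$. For the forward direction, starting from $\stra$ with $A = \{1,\ldots,n\}$, build $G = (V,E)$ as the disjoint union of one copy of $\cC$ on $V_0 = U \cup W$ together with $n$ copies of $\cD$ on vertex sets $V_i = \{i^P, i^Q, i^S, i^{R_1}, \ldots, i^{R_m}\}$, and then add the edge $(i^{R_l}, j^S)$ for every $(i,j) \in R_l^{\stra}$. A direct verification against the definitions of $\Psi_0$ and $\Psi_{\cP_6}$ shows $G \models \Psi_0 \wedge \Psi_{\cP_6}$, and the witnesses corresponding to the tuple $(i_1,\ldots,i_t)$ are chosen to be $u_k := i_k^P$. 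For the backward direction, Lemma~\ref{lem:representation} already hands us a partition $V = V_0 \cup V_1 \cup \cdots \cup V_n$ of the required shape, so define $\stra$ on $\{1,\ldots,n\}$ by $R_l^{\stra} := \{(i,j) : (i^{R_l}, j^S) \in E\}$ and read off $i_k$ as the unique index with $u_k = i_k^P$.

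The heart of the induction is the atomic case $R_l(x,y)$. Suppose $u = i^P$ and $v = j^P$, as the leading conjuncts $\Psi_P(x) \wedge \Psi_P(y)$ of $\widetilde{R_l}$ force. Unpacking the nested existentials: the outer quantifier demands a $\Psi_Q$-vertex adjacent to $v$, which by Lemmas~\ref{lem:partition} and~\ref{lem:representation} can only be $j^Q$; the next quantifier (reusing the name $y$) demands a $\Psi_S$-vertex adjacent to $j^Q$, which must be $j^S$; finally, the innermost quantifier (reusing $z$) demands a $\Psi_{R_l}$-vertex adjacent to both $i^P$ and $j^S$, and the unique $\Psi_{R_l}$-neighbour of $i^P$ is $i^{R_l}$. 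Hence $G \models \widetilde{R_l}(i^P, j^P)$ iff $(i^{R_l}, j^S) \in E$, which under the above correspondence is exactly $(i,j) \in R_l^{\stra}$. The case $x = y$ is immediate from the definition. The Boolean cases for $\wedge$ and $\neg$ are transparent, and for $\exists x\, \varphi_1$ the guard $\Psi_P(x)$ ensures the quantifier ranges exactly over $\{1^P, \ldots, n^P\}$, matching the domain $A$ of $\stra$.

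The main obstacle is justifying the variable reuse in the three-variable atomic translation $\widetilde{R_l}(x,y)$: the names $y$ and $z$ are rebound in nested scopes, so one must check that no live occurrence of an earlier binding survives past the point of reuse. This holds because at each rebinding the previous value of the reused variable appears only in atoms $E(\cdot,\cdot)$ and $\Psi_\alpha(\cdot)$ immediately preceding the next existential, discharging its role before it is overwritten; this is what makes the semantic unpacking above coherent. Once the atomic case is in hand, the rest is routine induction, using at each step that $\Psi_0 \wedge \Psi_{\cP_6}$ pins down the partition of $V$, and therefore the correspondence with $\stra$, uniquely.
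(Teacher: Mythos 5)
There is a genuine gap in your forward direction: the graph you build is \emph{not} the one that works, and it does not satisfy $\Psi_0$. You take the disjoint union of one copy of $\cC$ and $n$ copies of $\cD$ and then add only the edges $(i^{R_l},j^S)$ encoding the relations. But the predicates $\Psi_P,\Psi_Q,\Psi_S,\Psi_{R_l}$ are defined by adjacency to specific vertices of the path $U$ (cf.\ $\cF_2$--$\cF_5$ and the formulas of Step~2: e.g.\ $\Psi_P(x)$ forces $x$ to be adjacent to exactly one endpoint of $G[U]$ and to no other $U$-vertex). In your graph no vertex of any $\cD$-copy is adjacent to any vertex of $U$, so $\Psi_P(i^P)$, $\Psi_Q(i^Q)$, $\Psi_S(i^S)$, $\Psi_{R_l}(i^{R_l})$ are all false; in particular the guards $\Psi_P$ in $\widetilde{\varphi}$ fail at your chosen witnesses $u_k=i_k^P$, and the last conjunct of $\Psi_0$ (every vertex has degree $1$ or satisfies some $\Psi_\alpha$) is violated by, say, $i^Q$. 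Worse, if some $i^{R_l}$ or $i^S$ receives no relation edge it has degree $1$, so its neighbour $i^P$ or $i^Q$ spuriously satisfies $\Psi_U$, and then $\Psi_{\cP_3}$/$\Psi_{\cP_4}$ fail because these spurious $U$-vertices lie in a different connected component from the genuine path. So the claimed ``direct verification'' of $G\models\Psi_0\wedge\Psi_{\cP_6}$ does not go through.

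The missing idea is precisely the attachment of each $\cD$-copy to the path: in the intended construction every $i^P$ is made adjacent to $u_1$, every $i^Q$ to $u_2$, every $i^S$ to $u_3$, and every $i^{R_l}$ to $u_{2l-1}$, and to no other vertex of $U$. These extra edges are what make $\cF_2$--$\cF_5$ hold for the right vertices, give every vertex of $V_i$ degree at least $2$ (so no spurious $\Psi_U$- or degree-$1$ vertices arise), and keep the graph bipartite. With that correction, the rest of your argument --- the choice of witnesses $i_k^P$, the backward direction via Lemma~\ref{lem:representation}, the induction with the atomic case unpacked through $\Psi_{\cP_5}$ and $\Psi_{\cP_6}$, and the justification of the reuse of the variables $y,z$ --- is sound and follows the same route as the intended proof, which indeed consists of this construction plus a straightforward induction on $\varphi$.
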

\begin{proof}
For a structure $\stra = \langle A, R_1^{\stra},\ldots,R_m^{\stra}\rangle$,
where $A = \{1,\ldots,n\}$,
let $G=(V,E)$ be the following graph.
\begin{itemize}\itemsep=0pt
\item
$V= U \cup W \cup V_1 \cup  \cdots  \cup V_n$,
where each $V_i = \{i^P,i^Q,i^S,i^{R_1},\ldots,i^{R_m}\}$
and $U=\{u_1,\ldots,u_{4m+1}\}$ and $W=\{w_1,\ldots,w_{4m+1}\}$.
\item
$G[U \cup W]$ is isomorphic to $\cC$
and $G[V_i]$ is isomorphic to $\cD$, for each $i=1,\ldots,n$.
\item
Every vertex $u \in \{1^P,\ldots,n^P\}$ is adjacent to $u_1$,
and not to any other vertex in $U$.
\item
Every vertex $u \in \{1^Q,\ldots,n^Q\}$ is adjacent to $u_2$,
and not to any other vertex in $U$.
\item
Every vertex $u \in \{1^S,\ldots,n^S\}$ is adjacent to $u_3$,
and not to any other vertex in $U$.
\item
For each $R_l \in \{R_1,\ldots,R_m\}$,
every vertex $u \in \{1^{R_l},\ldots,n^{R_{l}}\}$ is adjacent to $u_{2l-1}$,
and not adjacent to any other vertex in $U$.
\item
For each $R_l \in \{R_1,\ldots,R_m\}$,
for each $(i,j)\in R_l^{\stra}$,
we have an edge $(i^{R_l},j^S)$ in $G$.
\end{itemize}
By straightforward induction on formula $\varphi(z_1,\ldots,z_t)$,
we can establish the following. For every $i_1,\ldots,i_t \in A$:
\begin{eqnarray*}
\stra\ \models\ \varphi(i_1,\ldots,i_t)
& \mbox{if and only if} &
G\ \models\ \Psi_0\wedge\Psi_{\cP_{6}}\wedge \widetilde{\varphi}(i_1^P,\ldots,i_t^P).
\end{eqnarray*}

Vice versa,
let $G=(V,E)\models \Psi_0 \wedge \Psi_{\cP_{6}}$.
Let $V_0\cup V_1\cup \cdots\cup V_n$ be the partition of $V$,
where $V_i=\{i^P,i^Q,i^S,i^{R_1},\ldots,i^{R_m}\}$, for each $i=1,\ldots,n$,
as in Lemma~\ref{lem:representation}.
We can define a structure $\stra = \langle A, R_1^{\stra},\ldots,R_m^{\stra}\rangle$
as follows.
\begin{itemize}\itemsep=0pt
\item
$A=\{1,\ldots,n\}$.
\item
For each $R_l \in \{R_1,\ldots,R_m\}$,
for every edge $(i^{R_l},j^S)$ in $G$,
we have $(i,j)\in R_l^{\stra}$.
\end{itemize}
Again, by straightforward induction on formula $\varphi(z_1,\ldots,z_t)$,
we can establish the following. For every $i_1,\ldots,i_t \in A$:
\begin{eqnarray*}
\stra\models \varphi(i_1,\ldots,i_t)
& \mbox{if and only if} &
G\models \Psi_0\wedge\Psi_{\cP_{6}}\wedge \widetilde{\varphi}(i_1^P,\ldots,i_t^P).
\end{eqnarray*}
This completes our proof.
\end{proof}

To complete our proof of Lemma~\ref{lem:preserve-k-var},
we set $\Phi'$ as follows.
\begin{eqnarray*}
\Phi' & := & \Psi_0 \ \wedge \ \Psi_{\cP_{6}} \ \wedge \ \widetilde{\Psi}
\end{eqnarray*}
That $\Phi'$ is the desired sentence follows immediately from Lemmas~\ref{lem:representation} and~\ref{lem:final}.

Note also that for $G\models \Psi_0\wedge\Psi_{\cP_{6}}\wedge \widetilde{\Phi}$,
the additional edge needed to represent the relation $R_l^{\stra}(i,j)$ in $G$ is between $i^{R_l}$ and $j^S$,
thus the partition $V'\cup V''$ as defined in Equations~(\ref{eq:Va}) and~(\ref{eq:Vb}) still preserves
the bipartite-ness of $G$.

\section{Concluding remarks}
\label{sec:concl}

In this paper we have shown that the spectrum of a sentence 
using at least three variables and binary relation symbols
is linearly proportional to the spectrum of a sentence using the same amount of variables
and only one symmetric binary relation symbol $E$, whose models are all bipartite graphs (Theorem~\ref{theo:main}).
Building from our previous work~\cite[Corollary~3.5]{KT15tocl},
we obtain that to settle Asser's conjecture,
it is sufficient to consider only sentences using only three variables on bipartite graphs (Corollary~\ref{cor:asser}),
i.e., the following two sentences are equivalent.
\begin{itemize}\itemsep=0pt
\item 
The class of first-order spectra is closed under complement.
\item
For every three-variable sentence $\varphi$ 
whose models are all undirected bipartite graphs,
the complement of $\spec(\varphi)$ is also a spectrum.
\end{itemize}

The proof of Corollary~\ref{cor:asser} follows closely the one in~\cite[Corollary~3.5]{KT15tocl}.
The direction from the first to the second bullet is trivial.
The other direction is as follows.
Define the following class $\cC$.
\begin{eqnarray*}
\cC & := &
\left\{
\begin{array}{l|l}
\spec(\phi) &
\begin{array}{l}
\phi \ \mbox{uses only three variables}
\\ 
\mbox{and its models are all undirected bipartite graphs}
\end{array}
\end{array} 
\right\}
\\
\cocC & := &
\left\{
\begin{array}{l|l}
\bbN- S &
S \in \cC
\end{array} 
\right\}
\end{eqnarray*}
Suppose that the second bullet holds, i.e., $\cocC \subseteq \spec$.
Let $S$ be a set of integers such that $A\in \ntime{2^n}$,
where the input number is written in binary form.
In~\cite{KT15tocl}, we have already shown that $S$ is the spectrum
of a three-variable sentence using only binary relations.
By Theorem~\ref{theo:main},
there is $p$ and $q$ such that
the set $S' =\{px+q \mid x \in S\} \in \cC$.
By the assumption that $\cocC \subseteq \spec$,
we have that $\bbN-S' \in \spec=\ne$.
Since addition/subtraction/multiplication/division by constant
can be performed in linear time,
we have $\bbN-S \in \ne$.
By padding argument,
this implies that for every set $S \in \ne$,
the complement $\bbN - S$ also belongs to $\ne$.
Then, Corollary~\ref{cor:asser} follows immediately from $\ne=\spec$.

Note that Corollary~\ref{cor:asser} reduces Asser's conjecture in two directions:
First, it reduces the number of variables to three, and
second, it reduces to sentences whose models are all undirected bipartite graphs.
It should be remarked that bipartite-ness is not first-order definable,
thus, it will be interesting to obtain a characterization of sentences whose models are all bipartite graphs.
We leave this as future work.

It will also be interesting to show whether
the linear blowup in Theorem~\ref{theo:main} is necessary.
As pointed out in the introduction,
Durand, et. al. showed that there is a strong evidence that 
collapsing the class of spectra involving arbitrary number of unary functions
to a fixed number of unary functions is likely to be difficult~\cite[Proposition~5.1]{DurandFL97}.
Similar evidence for Theorem~\ref{theo:main} will be interesting.

\subsection*{Acknowledgement}
The authors would like to thank the anonymous referees for their excellent comments.
They are also grateful to Arnaud Durand and Etienne Grandjean
for their helpful comments that greatly improve the earlier version of this paper.
The proof in Remark~\ref{remark}, which is simpler than our original proof,
is due to them.
The second author acknowledges the generous financial support of 
Taiwan Ministry of Science and Technology under the grant no.~105-2221-E-002-145-MY2.

\bibliographystyle{plainurl}
%\bibliography{spec-3var}

\begin{thebibliography}{10}



\bibitem{Asser55}
G.~Asser.
\newblock Das repr\"asentenproblem in pr\"adikatenkalk\"ul der ersten stufe mit
  identit\"at.
\newblock {\em Zeitschrift f\"ur mathematische Logik und Grundlagen der
  Mathematik}, 1:252--263, 1955.
  
  
\bibitem{Durand96}
A.~Durand.
\newblock Hi\'erarchies de D\'efinissabilit\'e Logique au Second Ordre, 
Ph.D. thesis, Universit\'e de Caen, 1996.

\bibitem{DurandFL97}
A.~Durand, R.~Fagin, and B.~Loescher.
\newblock Spectra with only unary function symbols.
\newblock In {\em Proceedings of CSL}, pages 189--202, 1997.

\bibitem{DJMM12}
A.~Durand, N.~D. Jones, J.~A. Makowsky, and M.~More.
\newblock Fifty years of the spectrum problem: survey and new results.
\newblock {\em Bulletin of Symbolic Logic}, 18(4):505--553, 2012.

\bibitem{DurandR96}
A.~Durand and S.~Ranaivoson.
\newblock First-order spectra with one binary predicate.
\newblock {\em Theoretical Computer Science}, 160(1{\&}2):305--320, 1996.

\bibitem{Fagin73}
R.~Fagin.
\newblock {\em Contributions to the model theory of finite structures}.
\newblock PhD thesis, University of California, Berkeley, 1973.

\bibitem{Fagin}
R.~Fagin.
\newblock Generalized first-order spectra and polynomial-time recognizable
  sets.
\newblock In {\em Proceedings of SIAM-AMS Complexity of Computation}, volume~7,
  1974.

\bibitem{Fagin75}
R.~Fagin.
\newblock A spectrum hierarchy.
\newblock {\em Zeitschrift f\"ur mathematische Logik und Grundlagen der
  Mathematik}, 21:123--134, 1975.

\bibitem{Grandjean84}
E.~Grandjean.
\newblock The spectra of first-order sentences and computational complexity.
\newblock {\em SIAM Journal on Computing}, 13(2):356--373, 1984.

\bibitem{Grandjean85}
E.~Grandjean.
\newblock Universal quantifiers and time complexity of random access machines.
\newblock {\em Mathematical Systems Theory}, 18(2):171--187, 1985.

\bibitem{Grandjean90}
E.~Grandjean.
\newblock First-order spectra with one variable.
\newblock {\em Journal of Computer and System Sciences}, 40(2):136--153, 1990.

\bibitem{GrandjeanO04}
E.~Grandjean and F.~Olive.
\newblock Graph properties checkable in linear time in the number of vertices.
\newblock {\em Journal of Computer and System Sciences}, 68(3):546--597, 2004.

\bibitem{JS74}
N.~Jones and A.~Selman.
\newblock Turing machines and the spectra of first-order formulas.
\newblock {\em The Journal of Symbolic Logic}, 39:139--150, 1974.

\bibitem{KT15tocl}
E.~Kopczy\'{n}ski and T.~Tan.
\newblock On the variable hierarchy of first-order spectra.
\newblock {\em ACM Transactions on Computational Logic}, 16(2):17:1--17:12,
  2015.

\bibitem{KT15sicomp}
E.~Kopczy\'{n}ski and T.~Tan.
\newblock Regular graphs and the spectra of two-variable logic with counting.
\newblock {\em {SIAM} Journal on Computing}, 44(3):786--818, 2015.

\bibitem{Lynch82}
J.~Lynch.
\newblock Complexity classes and theories of finite models.
\newblock {\em Mathematical Systems Theory}, 15(2):127--144, 1982.

\bibitem{Pudlak75}
P.~Pudl\'{a}k.
\newblock The observational predicate calculus and complexity of computations.
\newblock {\em Commentationes Mathematicae Universitatis Carolinae},
  16:395--398, 1975.

\bibitem{Scholz52}
H.~Scholz.
\newblock Ein ungel\"ostes problem in der symbolischen logic.
\newblock {\em The Journal of Symbolic Logic}, 17:160, 1952.

\bibitem{SHL65}
R.~Stearns, J.~Hartmanis, and P.~Lewis II.
\newblock Hierarchies of memory limited computations.
\newblock In {\em FOCS}, pages 179--190, 1965.

\end{thebibliography}

\end{document}